\title{The Complexity of Promise Constraint Satisfaction Problem Seen from the Other Side}
\author{Kristina Asimi}{Department of Algebra, Faculty of Mathematics and Physics, Charles University, Czechia \and \url{https://sites.google.com/view/kristina-asimi/home}}{asimptota94@gmail.com}{}{}
\author{Libor Barto}{Department of Algebra, Faculty of Mathematics and Physics, Charles University, Czechia \and \url{https://www2.karlin.mff.cuni.cz/~barto/} }{libor.barto@gmail.com}{https://orcid.org/0000-0002-8481-6458}{}
\author{Victor Dalmau}{Department of Information and Communication Technologies, Universitat Pompeu Fabra, Spain \and \url{https://www.upf.edu/web/victor-dalmau}}{victor.dalmau@upf.edu}{}{}
\authorrunning{K. Asimi, L. Barto, and V. Dalmau}
\keywords{Constraint Satisfaction Problem, Promise Constraint Satisfaction Problem, Computational Complexity, Parameterized Complexity, Approximating Clique, Homomorphisms} 
\newtheorem{question}[theorem]{Question}
\newcommand{\CSP}{\mathrm{CSP}}
\newcommand{\PCSP}{\mathrm{PCSP}}
\newcommand{\rel}[1]{\mathbb{#1}}
\newcommand{\tuple}[1]{\mathbf{#1}}
\newcommand{\ar}{\mathrm{ar}}
\newcommand{\yes}{\mathtt{Yes}}
\newcommand{\no}{\mathtt{No}}
\newcommand{\HOM}{\mathrm{Hom}}
\newcommand{\clique}{\mathtt{Clique}}
\newcommand{\PHOM}{\mathrm{PHom}}
\newcommand{\gapc}{\mathtt{Gap}$-$\mathtt{Clique}}
\newcommand{\leftcomp}[1]{#1^{\leftarrow}}
\newcommand{\rightcomp}[1]{#1^{\rightarrow}}
\newcommand{\size}[1]{|\!|#1|\!|}
\newcommand{\uuu}[1]{U\langle #1 \rangle}
\newcommand{\vvv}[1]{V\langle #1 \rangle}
\begin{document}

\maketitle

\begin{abstract}
We introduce the framework of the left-hand side restricted promise constraint satisfaction problem, which includes problems like approximating clique number of a graph. We study the parameterized complexity of problems in this class and provide some initial results. The main technical contribution is a sufficient condition for W[1]-hardness which, in particular, covers left-hand side restricted bounded arity CSPs.
\end{abstract}

\section{Introduction}

The Constraint Satisfaction Problem (CSP) provides a
common framework for expressing a wide range of both theoretical and real-life
combinatorial problems~\cite{RBW06}. One solves an instance of the CSP by assigning
values to its variables so that its constraints are satisfied.

A convenient way to formalize the CSP is via homomorphisms between relational structures.
Informally (see Section \ref{sec:prelim} for formal definitions), a \emph{relational structure} $\rel A$ consists of a collection of relations on a common universe $A$. Two structures are \emph{similar} if they have the same number of relations and the corresponding relations are of the same arity. A \emph{homomorphism} between similar structures is a relation-preserving map between their universes.
The CSP can then be defined as the problem to decide whether there is a homomorphisms $\rel A \to \rel B$ between a given pair $(\rel A,\rel B)$ of similar structures.

\subsection{Right-hand side restrictions}

Many computational problems, including various versions of logical satisfiability, graph coloring, and systems of equations can be phrased as CSPs with a fixed right-hand side structure $\rel A$, also called the \emph{template} of the CSP (see~\cite{BKW17}). 

An example of such a CSP is \emph{$k$}-$\mathtt{coloring}$, the problem of deciding whether the vertices of a given graph can be colored by $k$ different colors so that no adjacent vertices are assigned the same color. Formulated as a homomorphism problem, \emph{$k$}-$\mathtt{coloring}$ is the problem of deciding whether a given graph admits a homomorphism to the $k$-clique $\rel K_k$ (complete graph on $k$ vertices). It follows that the \emph{$k$}-$\mathtt{coloring}$ problem is equivalent to the CSP with template $\rel K_k$. As is well known, this problem is solvable in polynomial time for $k=2$ and it is NP-complete for $k>2$~\cite{Kar72}. 

The computational complexity of CSPs over finite templates (i.e., templates whose universe is a finite set) is now completely classified by a celebrated dichotomy theorem independently obtained by Bulatov~\cite{Bul17} and Zhuk~\cite{Zhu17,Zhu20}: every such problem is either tractable (that is, solvable in polynomial-time) or NP-complete.

A recent direction of research (see \cite{BBKO21,KO22}) is to study a generalization of fixed-template CSPs called the Promise CSP (PCSP). A \emph{template} for the PCSP is a pair $(\rel A,\rel B)$ of similar structures such that $\rel A$ has a homomorphism to $\rel B$, and the PCSP over $(\rel A,\rel  B)$ is distinguishing between the case that a given finite structure $\rel X$ admits a homomorphism to $\rel A$ and the case that $\rel X$ does not even admit a homomorphism to $\rel B$ (the promise is that one of the cases takes place). The PCSP framework generalizes that of CSP (take $\rel A = \rel B$) and includes important approximation problems. For example, if $\rel A = \rel K_{k}$  and $\rel B = \rel K_l$, $k \leq l$, then $\PCSP$ over $(\rel A,\rel B)$ is a version of the \emph{approximate graph coloring problem}, namely, the problem to distinguish graphs that are $k$-colorable from those that are not even $l$-colorable. The classification of the complexity of this problem is an open problem after more than 40 years of research.

\subsection{Left-hand side restrictions}

By restricting the left-hand side in the CSP instead of the right-hand side, we get problems of different flavor: we restrict the structure of the constraints rather than their language. For instance, if we fix the left-hand side structure to $\rel K_k$ we essentially get the problem  of deciding whether a given graph contains a $k$-clique. This problem is tractable, as well as any  CSP with a fixed finite left-hand side structure, since we can simply check all possible mappings to the right-hand side structure.

However, restricting the left-hand side to a \emph{class} of structures rather than a single structure yields interesting problems. For example, if we take the class of all finite cliques, we essentially get the problem to decide, for a given graph and integer $k$, whether the graph contains a $k$-clique. This NP-complete problem \cite{Kar72} plays a major role in the parameterized complexity theory \cite{downey2012parameterized}, e.g., as a starting point for hardness results. 

For classes of structures $\mathcal{C}$ of bounded arity, a dichotomy theorem was provided in Grohe's paper \emph{The complexity of homomorphism and constraint satisfaction problems seen from the other side} \cite{Grohe2007}: under some complexity theoretic assumption, the CSP with the left-hand side restricted to $\mathcal C$ is solvable in polynomial time if and only if $\mathcal{C}$ has bounded tree width modulo homomorphic equivalence.

\subsection{Contribution}

We generalize the left-hand side restricted CSPs to left-hand side restricted PCSPs in a way similar to the right-hand restrictions discussed above. 
Namely, a \emph{template} is a class of pairs of structures $(\rel A, \rel B)$ such that there is a homomorphism from $\rel A$ to $ \rel B$. The problem is to decide, for a given $(\rel A,\rel B)\in \mathcal{C}$ and a relational structure $\rel X$ similar to $\rel A$ (and $\rel B$), whether there is a homomorphism from $\rel B$ to $\rel X$ or not even from $\rel A$ to $\rel X$. Examples include versions of the \emph{approximate clique problem}: given $k \geq l$ (coming from some fixed set) and a graph, decide whether the graph has a $k$-clique or not even an $l$-clique. 

Our main technical contribution is a sufficient condition for hardness that generalizes the hardness part of the above mentioned Grohe's result \cite{Grohe2007} to the promise setting (and also somewhat simplifies it). 

The next two sections are devoted to the left-hand side restricted CSP. The presentation largely follows \cite{Grohe2007}. The last section is devoted to the left-hand side restricted PCSP.

\section{Preliminaries} \label{sec:prelim}

\subsection{Relational structures and homomorphisms}

A \emph{signature} is a finite collection of relation symbols each with an associated \emph{arity}, denoted $\ar(R)$ for a relation symbol $R$. The \emph{arity} of a signature is the maximum of the arities of all relations symbols it contains. A \emph{relational structure} $\rel A$ in the signature $\sigma$, or a $\sigma$-\emph{structure}, consists of a finite set $A$, called the \emph{universe} of $\rel A$, and a relation $R^{\rel A} \subseteq A^{\ar(R)}$
for each symbol $R$ in $\sigma$, called the \emph{interpretation} of $R$ in $\rel A$. Two structures are called \emph{similar} if they are in the same signature.  
We say that a class $\mathcal{C}$ of structures is of \emph{bounded arity} if there is an $r$ such that arity of the signature of every structure in $\mathcal{C}$ is at most $r$.

A structure over a signature containing a single binary relation symbol is called a \emph{directed graph}, or \emph{digraph}. If this relation is symmetric and loop free (i.e., it contains no pairs of the form $(a,a)$), we call the structure \emph{undirected graph}. If the relation of a graph is the disequality relation on the universe, we call the graph a \emph{complete graph} or \emph{clique}.

A $\sigma$-structure
$\rel A$ is a \emph{substructure} of a $\sigma$-structure
$\rel B$, denoted by $\rel A \subseteq \rel B$, if $A \subseteq B$ and $R^\rel A \subseteq R^\rel B$ for all $R \in \sigma$.
A structure $\rel A$ is a \emph{proper} substructure of $\rel B$, denoted by $\rel A \subset \rel B$, if $\rel A \subseteq \rel B$ and $\rel A \neq \rel B$.

We define the size of a $\sigma$-structure $\rel A$ to be
\[\size{\rel A} = |\sigma | + |A| + \Sigma _{R\in\sigma} |R^A| ar(R).\]
$\size{\rel A}$ is roughly the size of a reasonable encoding of $\rel A$. 
When taking structures $\rel A$ as inputs for algorithms, we measure the running time of the algorithm in terms of $\size{\rel A}$.

Given two similar structures $\rel A$ and $\rel B$, a  function $f$ from $A$ to $B$ is called a \emph{homomorphism} from $\rel A$ to $\rel B$ if $f(\tuple{a}) \in R^{\rel B}$ for any $\tuple{a} \in R^{\rel A}$, where $f(\tuple{a})$ is computed component-wise.
If there exists a homomorphism from $\rel{A}$ to $\rel{B}$, we write $\rel{A}\rightarrow\rel{B}$, and if there is none, we write $\rel{A}\not\to\rel{B}$. The composition of homomorphisms is a homomorphism.

Two structures $\rel A$ and $\rel B$ are \emph{homomorphically equivalent} if $\rel{A}\rightarrow\rel{B}$ and $\rel{B}\rightarrow\rel{A}$.

A relational structure $\rel A$ is a \emph{core} if there is no homomorphism from $\rel A$ to a proper
substructure of $\rel A$. A core \emph{of} a structure $\rel A$ is a substructure $\rel A'$ of $\rel A$ such that $\rel A\to\rel A'$
and $\rel A'$
is a core. Obviously, every core of
a structure is homomorphically equivalent to the structure. It can be shown that all cores of a structure $\rel A$ are isomorphic. So, we often speak of \emph{the} core of $\rel A$.

\subsection{Homomorphism problem}

General \emph{homomorphism problem} (or \emph{constraint satisfaction problem}) asks whether there is a homomorphism from one structure to another. We are interested in restrictions of this problem.
For two classes $\mathcal{C}$ and $\mathcal{D}$ of structures, $\HOM(\mathcal{C},\mathcal{D})$ is the following problem.

\medskip

Input: Similar structures $\rel A \in \mathcal{C}$, $\rel B \in \mathcal{D}$;

Output: $\yes$ if $\rel A\to \rel B$; $\no$ if $\rel A\not\to \rel B$.\\

If $\mathcal{C}$ is the class of all finite structures, we write $\HOM (-,\mathcal{D})$ instead of $\HOM(\mathcal{C},\mathcal{D})$. The problem $\HOM(-,\{\rel A\})$ is also known as $\CSP (\rel A)$.

Similarly, if $\mathcal{D}$ is the class of all finite structures, we write $\HOM (\mathcal{C},-)$ instead of $\HOM(\mathcal{C},\mathcal{D})$, and we call such a problem the \emph{left-hand side restricted CSP}. 
 If $\mathcal{C}$ is finite, then $\HOM (\mathcal{C},-)$ is solvable in polynomial time, so usually we are interested in the case when $\mathcal{C}$ is an infinite collection of structures.

If $\mathcal{C}$ is the class of all cliques, the problem $\HOM (\mathcal{C},-)$ is called the \emph{$\clique$ problem}.
In \cite{Kar72} Karp proved that $\clique$ is NP-complete.

\subsection{Graph minors and tree width}

We will denote the vertex set of a graph $\rel G$ by $G$ and its relation (or the set of edges) by $E^\rel G$. Since we are considering undirected graphs, we will view its edges as sets (unordered pairs) $e=\{v,w\}$, and we will use notations like $v\in e$ or $\{v,w\}\in E^\rel G$.

A graph $\rel H$ is a \emph{minor} of a graph $\rel G$ if $\rel H$ is isomorphic to a graph that can be obtained from a subgraph of $\rel G$ by contracting edges.
For example, in Figure \ref{fig:minors}, it is easy to see that $\rel H$ is a minor of $\rel G$.

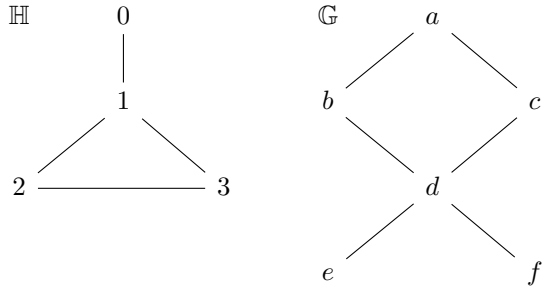
\begin{figure}[ht]
\begin{center}
\begin{tikzcd}
\rel H                     & 0 \arrow[d, no head]                      &   & \rel G                     & a \arrow[ld, no head] \arrow[rd, no head]                     &   \\
                      & 1 \arrow[ld, no head] \arrow[rd, no head] &   & b \arrow[rd, no head] &                                                               & c \\
2 \arrow[rr, no head] &                                           & 3 &                       & d \arrow[ru, no head] \arrow[ld, no head] \arrow[rd, no head] &   \\
                      &                                           &   & e                     &                                                               & f
\end{tikzcd}
\end{center}
\caption{\label{fig:minors}An example of a minor of a graph.}

\end{figure}

    A \emph{minor map} from $\rel H$ to $\rel G$ is
a mapping $\mu : H \to 2^G$ with the following properties.
\begin{itemize}
\item For all $v \in H$, the set $\mu(v)$ is nonempty and connected in $\rel G$.
\item For all $v,w \in H$, with $v \neq w$, the sets $\mu(v)$ and $\mu(w)$ are disjoint.
\item For all edges $\{v,w\} \in E^\rel H$, there are $v'\in\mu(v)$ and $w'\in\mu(w)$ such that $\{v',w'\} \in E^\rel G$.
\end{itemize}

For any two graphs $\rel H$ and $\rel G$, there is a minor map from $\rel H$ to $\rel G$ if and only if $\rel H$ is a minor of $\rel G$.
Moreover, if $\rel H$ is a minor of a connected graph $\rel G$, then we can always find a minor map from $\rel H$ onto $\rel G$, where by \emph{onto} we mean
\[
\bigcup_{v\in H}\mu(v)=G.\]
Going back to Figure \ref{fig:minors}, an example of a minor map from $\rel H$ to $\rel G$ is $\mu(0)=\{e\}$, $\mu(1)=\{d\}$, $\mu(2)=\{b\}$, $\mu(3)=\{a,c\}.$

    \emph{Trees} are connected acyclic graphs.
A \emph{tree-decomposition} of a graph $\rel G$ is a pair
$(\rel T , \beta)$, where $\rel T$ is a tree and $\beta : T \to 2^G$
such that the following conditions
are satisfied:
\begin{itemize}
\item For every $v\in G$ the set $\{t \in T | v \in \beta (t)\}$ is non-empty and connected in $\rel T$.
\item For every $e \in E^\rel G$ there is a $t \in T$ such that $e \subseteq \beta (t)$.
\end{itemize}
For example, for every graph $\rel G$ there is a tree-decomposition where the tree is one vertex mapping to the whole set $G$.

The \emph{width} of a tree-decomposition $(\rel T,\beta)$ is $\max\{|\beta (t)||t\in T\}-1$, and the
\emph{tree width} of a graph $\rel G$, denoted by $tw(\rel G)$, is the minimum $w$ such that $\rel G$ has a
tree-decomposition of width $w$.

For $k,l\geq 1$, the \emph{$(k \times l)$-grid} is the graph with vertex set $[k] \times [l]$ and an edge between $(i,j)$ and $(i',j')$ if and only if $|i-i'|+|j-j'| = 1$.
It can be shown that the
$(k \times k)$-grid has tree width $k$. Figure \ref{fig:k-grid} shows a tree-decomposition of width 3 for $(3\times 3$)-grid.
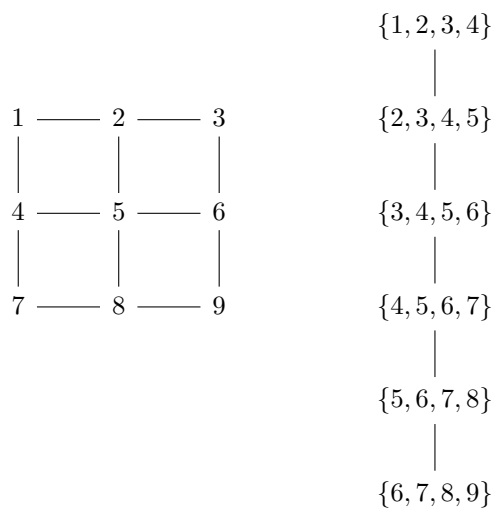
\begin{figure}[ht]
\begin{center}
\begin{tikzcd}
                                        &                                         &                      &  & \{1,2,3,4\} \arrow[d, no head] \\
1 \arrow[r, no head] \arrow[d, no head] & 2 \arrow[r, no head] \arrow[d, no head] & 3 \arrow[d, no head] &  & \{2,3,4,5\} \arrow[d, no head] \\
4 \arrow[r, no head] \arrow[d, no head] & 5 \arrow[r, no head] \arrow[d, no head] & 6 \arrow[d, no head] &  & \{3,4,5,6\} \arrow[d, no head] \\
7 \arrow[r, no head]                    & 8 \arrow[r, no head]                    & 9                    &  & \{4,5,6,7\} \arrow[d, no head] \\
                                        &                                         &                      &  & \{5,6,7,8\} \arrow[d, no head] \\
                                        &                                         &                      &  & \{6,7,8,9\}                   
\end{tikzcd}
\end{center}
\caption{\label{fig:k-grid}A tree-decomposition of a $(3\times 3$)-grid.}

\end{figure}
In \cite{ROBERTSON198692} Robertson and Seymour proved the following
“converse”, which is known as the Excluded Grid Theorem.

\begin{theorem}\label{thm:excludedgrid}
For every $k$ there exists a
$w(k)$ such that the $(k\times k)$-grid is a minor of every graph of tree width at least $w(k)$.
\end{theorem}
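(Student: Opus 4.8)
Since the statement is the Excluded Grid Theorem of Robertson and Seymour, a complete proof is far beyond a short sketch; what I can offer is the standard modern strategy (following Robertson–Seymour, with the simplifications due to Diestel, Jensen, Gorbunov and Thomassen), phrased so that only the notions already introduced in this section are taken for granted and everything else is either cited or built along the way.

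First I would reduce the task to producing a large \emph{wall} rather than a grid. A wall is a subcubic analogue of the grid, and there are elementary, size-polynomial translations in both directions between ``$\rel G$ has a $(k\times k)$-grid minor'' and ``$\rel G$ contains a subdivision of an $\Omega(k)$-wall''. Walls are the more convenient object because their branch vertices have bounded degree, so the routing arguments below do not have to contend with high-degree vertices. Thus it suffices to show: large tree width forces a large wall subdivision.

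Second, the engine is \emph{tree-width duality}. One shows that a graph of large tree width contains a highly connected dual obstruction — a bramble (equivalently, a haven or a tangle) of large order: a family of pairwise touching connected subgraphs of $\rel G$ no small set of vertices of which is a transversal. The precise statement I would use is that $tw(\rel G)\ge w$ implies $\rel G$ has a bramble of order $> w$. I would either cite the Seymour–Thomas duality theorem or prove the needed direction directly, by greedily attempting to assemble a tree-decomposition of $\rel G$ from small separators and observing that the only way this can fail is by exhibiting such a bramble. The key feature of a bramble of order $t$ is that between any two vertex sets of size less than $t/3$ it provides many vertex-disjoint connecting paths, and these paths can moreover be rerouted to avoid any prescribed bounded set of already-used vertices.

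Third — the heart of the argument — I would construct the wall inside $\rel G$ by induction on its size, using the bramble as a routing oracle. One maintains a partial wall $\rel H$ (a collection of ``horizontal'' and ``vertical'' paths meeting in the prescribed pattern) together with a large sub-bramble avoiding most of $\rel H$; at each step one uses the disjoint-paths property of the bramble to route the next row and column, splices the new paths onto $\rel H$, discards the bramble members that got consumed, and continues with the surviving sub-bramble. Each step enlarges the wall by one layer while shrinking the available bramble order by a controlled (in the simple proofs, multiplicative) amount, so after $\Theta(k)$ steps the wall has the required size, provided the initial order $w(k)$ was chosen large enough; converting the wall back to a $(k\times k)$-grid minor then finishes the proof. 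The main obstacle is exactly this inductive routing step: keeping the newly routed paths disjoint from each other and from the \emph{entire} partial wall while ensuring the bramble is not exhausted. This requires a careful ``cleaning'' of the bramble at every stage, and it is here that the bound on $w(k)$ blows up — the classical simplified proofs yield $w(k)$ exponential in $k$, and only the much more recent work of Chekuri–Chuzhoy and Chuzhoy–Tan brings it down to a polynomial — but for the present purposes any finite $w(k)$ suffices.
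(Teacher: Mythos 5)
The paper does not prove this statement at all: it is the Excluded Grid Theorem, quoted verbatim from Robertson and Seymour \cite{ROBERTSON198692} and used as a black box (both in the sketch of Theorem~\ref{thm:hardGrohe} and in the corollaries of Theorem~\ref{thm:left_hard}). So there is no proof in the paper to compare yours against; citing the result, as you essentially acknowledge, is the intended level of rigor here.

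As a standalone proof attempt, what you have written is an accurate road map of the standard modern arguments (reduction to walls, tree-width/bramble duality, inductive routing with cleaning, with the quantitative remarks about exponential versus polynomial $w(k)$ all correct), but it is a survey of a proof rather than a proof. The two load-bearing steps are only gestured at: (i) the duality statement that $tw(\rel G)\ge w$ forces a bramble of order greater than $w$ is itself a substantial theorem of Seymour and Thomas, and the ``prove the needed direction directly'' remark hides the real work; (ii) the claim that a bramble of order $t$ yields many vertex-disjoint paths between any two sets of size below $t/3$, reroutable around a prescribed bounded set, is a nontrivial Menger-type lemma, and the inductive routing step that splices new rows and columns onto the partial wall while retaining a usable sub-bramble is exactly the part of the Robertson--Seymour/Diestel-style proofs that occupies most of their length. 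If the intent is to justify the theorem within this paper, a citation (as the paper gives) is the right move; if the intent is to supply a proof, the routing argument and the duality ingredient would both have to be carried out, not described.
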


We need to transfer some of the graph theoretic notions to arbitrary relational structures.
The \emph{Gaifman graph} (also known as \emph{primal graph}) of a $\sigma$-structure $\rel A$ is the graph $\mathcal{G}(\rel A)$ with vertex set $A$ and an edge between vertices $a$ and $b$ if $a \neq b$
and there is a relation symbol $R\in\sigma$, say, of arity $r$, and a tuple $(a_1,a_2,\dots, a_r) \in R^\rel A$
such that $a, b \in \{a_1,a_2,\dots, a_r\}$. 
 We can now transfer the notions of graph minor theory to
relational structures. In particular, a subset $B \subseteq A$ is \emph{connected} in a relational structure $\rel A$ if it
is connected in $\mathcal{G}(\rel A)$. A \emph{minor map} from a relational structure $\rel A$ to a relational structure $\rel B$ is a mapping $\mu: A \to 2^B$ that is a minor map from $\mathcal{G}(\rel A)$ to $\mathcal{G}(\rel B)$. A \emph{tree decomposition} of a relational structure $\rel A$ can simply be defined to be a tree-decomposition of $\mathcal{G}(\rel A)$. 

A class $\mathcal{C}$ of structures has \emph{bounded tree width} if there exists $w$ such that $tw(\rel A) \leq w$ for all $\rel A \in \mathcal{C}$. A class $\mathcal{C}$ of structures has \emph{bounded tree width modulo
homomorphic equivalence} if there is $w$ such that every $\rel A \in \mathcal{C}$ is homomorphically equivalent to a structure of tree width at most $w$.

\subsection{Parameterized complexity theory}

Parameterized complexity theory studies the complexity of decision problems with respect to both the size of the input and an additional parameter. 

Formally, a \emph{parameterization} of a decision problem $P\subseteq \Sigma ^*$, where $\Sigma$ is an alphabet, is a polynomial time computable mapping $\kappa : \Sigma ^*\to\mathbb{N}$, and a \emph{parameterized problem} over $\Sigma$ is a pair $(P, \kappa)$ consisting of a problem $P\subseteq \Sigma ^*$ and a parameterization $\kappa$ of $P$. For example, the \emph{parameterized clique problem p-}$\clique$ is the problem $(P,\kappa)$, where $P$ is the set of all pairs $(\rel G,k)$ (suitably encoded over some finite alphabet) such that $\rel G$ contains a $k$-clique and the parameter $\kappa$ is defined by $\kappa (\rel G,k):=k$. We present parameterized problems in
the following form.

\smallskip

\emph{p-}$\clique$:

Input: Graph $\rel G$, $k \in \mathbb{N}$;

Parameter: $k$;

Output: $\yes$ if $\rel G$ has a clique of size $k$; $\no$ otherwise.\\

\noindent
We will parameterize the homomorphism problem in the following way.

\smallskip

\emph{p-}$\HOM (\mathcal{C},\mathcal{D})$:

Input: Similar structures $\rel A\in\mathcal{C}$, $\rel B\in\mathcal{D}$;

Parameter: $\size{\rel A}$;

Output: $\yes$ if $\rel A\to\rel B$; $\no$ otherwise.\\

As before, if $\mathcal{D}$ is the set of all finite relational structures, we write \emph{p-}$\HOM(\mathcal{C},-)$.

A parameterized problem $(P, \kappa)$ over $\Sigma$ is \emph{fixed-parameter tractable} if there is a
computable function $f:\mathbb{N}\to\mathbb{N}$ and an algorithm that decides if a given instance $x\in\Sigma ^*$ belongs to the problem $P$ in time
\[f(\kappa (x)) |x|^{O(1)}.\]
The class of all fixed-parameter tractable parameterized problems is denoted by FPT.

While in the traditional theory of computational complexity we use polynomial-time reductions, in the theory of parameterized complexity we require an analogous notion of reduction that preserves fixed-parameter tractability, called an \emph{fpt-reduction}.

An \emph{fpt-reduction} from a parameterized problem $(P, \kappa)$ over $\Sigma$ to a parameterized
problem $(P', \kappa ')$ over $\Sigma '$ is a mapping $R:\Sigma ^*\to (\Sigma ')^*$ such that:
\begin{itemize}
\item For all $x\in\Sigma ^*$ we have $x\in P$ if and only if $R(x)\in P'$.

\item There is a computable function $f:\mathbb{N}\to\mathbb{N}$ and an algorithm that, given $x\in \Sigma ^*$, computes $R(x)$ in time $f(\kappa (x)) |x|^{O(1)}$.

\item There is a computable function $g:\mathbb{N}\to\mathbb{N}$ such that for all instances $x\in \Sigma ^*$ we have $\kappa '
(R(x)) \leq g(\kappa(x))$.
\end{itemize}.

A specific example of an fpt-reduction is the following reduction of \emph{p-}$\clique$ to \emph{p-}$\HOM (\mathcal{C},-)$, where $\mathcal{C}$ is a class of structures that contains all cliques: an instance $(\rel G,k)$ of \emph{p-}$\clique$ is mapped to the instance $(\rel K _k,\rel G)$ of \emph{p-}$\HOM (\mathcal{C},-)$.

\emph{Hardness} and \emph{completeness} of parameterized problems for a parameterized complexity class are defined in the usual way: a parameterized problem is said to be hard for a parameterized complexity class if every problem in that class can be reduced to it using an fpt-reduction. A parameterized problem is said to be complete for a parameterized complexity class if it is hard for that class and it belongs to that class.

An analogue of NP in parameterized complexity theory is the class W[1]. It is a widely believed standard assumption that FPT $\neq$ W[1]. For the technical definitions of this and other parameterized complexity classes see \cite{downey2012parameterized} or \cite{FG2006}. 

It was shown that \emph{p-}$\clique$ is W[1]-complete \cite{DOWNEY1995109}.
The significance of this problem in the area of parameterized complexity is fundamental, as it has been used as a starting point for numerous reductions. We will see one of them in the next section.

\section{Complexity of the left-hand side restricted CSP}

In this section we present a dichotomy theorem~\cite{Grohe2007} for decidable classes of structures of bounded arity: for each such a class $\mathcal{C}$, \emph{p}-$\HOM(\mathcal{C},-)$ is either in FPT (even solvable in polynomial time) or W[1]-complete.

Marx's deep paper~\cite{Marx13} investigates the complexity for the case of unbounded arity. He also argues
that investigating the fixed-parameter
tractability of left-hand side restrictions is at least as interesting as investigating polynomial-time solvability. One reason is that FPT seems to be a more robust class in this context, e.g., these problems are unlikely to exhibit a standard, non-parameterized complexity dichotomy~\cite{Grohe2007}. 

The tractability part of the bounded arity dichotomy has been proved by Dalmau, Kolaitis, and Vardi in \cite{Dalmau2002}. We state a slightly stronger version as stated in \cite{Grohe2007}.

\begin{theorem}\label{thm:tractabilitydalmau}
Let $\mathcal{C}$ be a class of structures of bounded
tree width modulo homomorphic equivalence. Then $\HOM(\mathcal{C},-)$ is in polynomial time.
\end{theorem}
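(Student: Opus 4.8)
The plan is to show that one fixed, parameter-free algorithm --- establishing \emph{$k$-consistency} for the constant $k := w+1$, where $w$ bounds the tree width of $\mathcal{C}$ modulo homomorphic equivalence --- decides $\HOM(\mathcal{C},-)$. Recall the $k$-consistency algorithm on an instance $(\rel A,\rel B)$: for every $X\subseteq A$ with $|X|\le k$ it maintains a set $\mathcal H_X$ of \emph{partial homomorphisms} $f\colon X\to B$ (maps preserving every constraint tuple of $\rel A$ all of whose entries lie in $X$); it initializes each $\mathcal H_X$ with all such maps, then repeatedly discards $f\in\mathcal H_X$ whenever $|X|<k$ and $f$ has no extension inside $\mathcal H_{X\cup\{a\}}$ for some $a\in A$, or whenever some restriction $f|_Y$ has already been discarded from $\mathcal H_Y$; it stops at a fixpoint and \emph{succeeds} if the empty map still lies in $\mathcal H_\emptyset$. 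Write $\rel A\to^k\rel B$ when it succeeds. For fixed $k$ there are at most $|A|^k$ sets $X$, each $\mathcal H_X$ starts with at most $|B|^k$ maps, every consistency check is one membership test in a relation of $\rel B$, and each of the polynomially many discard steps removes an element; hence $\rel A\to^k\rel B$ is decidable in time polynomial in $\size{\rel A}+\size{\rel B}$. The algorithm for $\HOM(\mathcal C,-)$ is then: on input $(\rel A\in\mathcal C,\rel B)$, run $k$-consistency and answer $\yes$ iff it succeeds.

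Correctness would follow from three facts. \emph{(Soundness)} If $h\colon\rel A\to\rel B$ is a homomorphism then every restriction $h|_X$ survives the whole run --- a one-line simultaneous induction over the discard steps --- so $\rel A\to\rel B$ implies $\rel A\to^k\rel B$. \emph{(Composition)} $\rel C\to\rel A$ and $\rel A\to^k\rel B$ imply $\rel C\to^k\rel B$, and $\rel A\to^k\rel B$ and $\rel B\to\rel D$ imply $\rel A\to^k\rel D$; the slickest route is via the standard reformulation of $\rel A\to^k\rel B$ as ``Duplicator wins the existential $k$-pebble game on $(\rel A,\rel B)$'', composing the winning strategy with the given homomorphism. \emph{(Completeness for bounded tree width)} If $tw(\rel A)<k$ and $\rel A\to^k\rel B$, then $\rel A\to\rel B$. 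Granting these: given $\rel A\in\mathcal C$, fix $\rel A'$ homomorphically equivalent to $\rel A$ with $tw(\rel A')\le w=k-1$; if $\rel A\to\rel B$ then $\rel A\to^k\rel B$ by soundness, and conversely if $\rel A\to^k\rel B$ then $\rel A'\to^k\rel B$ (composition, using $\rel A'\to\rel A$), hence $\rel A'\to\rel B$ (completeness for bounded tree width), hence $\rel A\to\rel A'\to\rel B$. So $k$-consistency answers $\yes$ exactly when $\rel A\to\rel B$, in polynomial time.

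The technical heart, and the step I expect to be the real work, is completeness for bounded tree width. Here I would fix a tree decomposition $(\rel T,\beta)$ of $\mathcal G(\rel A)$ of width $<k$ --- so every bag has at most $k$ elements --- root it arbitrarily, and let $\mathcal F=\bigcup_X\mathcal H_X$ be the nonempty family surviving the run. $\mathcal F$ is closed under restriction, contains the empty map, and has the ``forth'' property: any $f\in\mathcal F$ with $|\mathrm{dom}(f)|<k$ extends inside $\mathcal F$ by any single prescribed new element; moreover each member of $\mathcal F$ preserves all constraints within its domain. Processing $\rel T$ downward from the root, attach to each node $t$ a map $h_t\in\mathcal F$ with $\mathrm{dom}(h_t)=\beta(t)$, built from the empty map by repeated use of the forth property, so that $h_t$ and $h_{t'}$ agree on $\beta(t)\cap\beta(t')$ for every parent--child pair (restrict $h_t$ to $\beta(t)\cap\beta(t')$, then extend inside $\mathcal F$ to all of $\beta(t')$). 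Then $h(a):=h_t(a)$ for any $t$ with $a\in\beta(t)$ is well defined, since $\{t:a\in\beta(t)\}$ is nonempty and connected in $\rel T$ and $h_t(a)$ is unchanged along the edges of that subtree. Finally $h$ is a homomorphism: the distinct entries of any constraint tuple of $\rel A$ are pairwise adjacent in $\mathcal G(\rel A)$, hence by the Helly property of subtrees of a tree they lie together in some bag $\beta(t)$, and there $h$ coincides with $h_t$, which preserves that constraint.

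A remark on the design: the algorithm never computes $\rel A'$, nor the core of $\rel A$; it only ever inspects $\rel A$ itself. This is precisely what lets the proof avoid the (in general NP-hard) task of extracting a homomorphic equivalent of small tree width --- the phrase ``modulo homomorphic equivalence'' is absorbed purely through the composition property, which licenses the one-sided inference $\rel A'\to^k\rel B$ from $\rel A'\to\rel A$. The only loose ends in this plan are routine: the bookkeeping that the consistency iteration reaches its fixpoint in polynomially many steps, and the classical equivalence between the consistency algorithm and the existential $k$-pebble game invoked in the composition step.
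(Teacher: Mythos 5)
Your proposal is correct, and it is essentially the standard argument behind this theorem: the paper itself states the result without proof, citing Dalmau, Kolaitis, and Vardi (with the strengthened form as in Grohe), and their proof is exactly your route --- run $k$-consistency (equivalently, the existential $k$-pebble game) for the fixed constant $k=w+1$, use soundness plus the completeness-on-bounded-treewidth construction along a tree decomposition, and absorb the phrase ``modulo homomorphic equivalence'' through the composition property so that the algorithm never has to compute the bounded-treewidth equivalent $\rel A'$. The steps you leave as ``routine'' (fixpoint in polynomially many iterations, and the consistency/pebble-game equivalence) are indeed classical, so there is no gap.
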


In \cite{Grohe2007} Grohe proved the following hardness theorem.

\begin{theorem}\label{thm:hardGrohe}
Let $\mathcal{C}$ be a recursively enumerable class of structures of bounded
arity that does not have bounded tree width modulo homomorphic equivalence.
Then \emph{p-}$\HOM(\mathcal{C},-)$ is \emph{W[1]}-hard under fpt-reductions.
\end{theorem}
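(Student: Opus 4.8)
The plan is to give an fpt-reduction from \emph{p-}$\clique$ to \emph{p-}$\HOM(\mathcal{C},-)$; since \emph{p-}$\clique$ is W[1]-hard~\cite{DOWNEY1995109} and fpt-reductions compose, this proves the theorem. I would first reformulate the hypothesis. A structure is homomorphically equivalent to one of tree width at most $w$ if and only if its core has tree width at most $w$, since the core is the inclusion-minimal structure of its homomorphism-equivalence class and embeds into every structure of that class. So the assumption on $\mathcal{C}$ says precisely that the cores of the members of $\mathcal{C}$ have \emph{unbounded} tree width. Combined with recursive enumerability, this lets me compute, for every $m\in\N$, a structure in $\mathcal{C}$ whose core has tree width at least $w(m)$, where $w$ is the function from Theorem~\ref{thm:excludedgrid}: enumerate $\mathcal{C}$, compute cores and their tree widths, and stop at the first suitable structure. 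This terminates by the hypothesis, and its cost depends on $m$ only, so it is absorbed into the parameter-dependent factor of an fpt-reduction.

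Given an instance $(\rel G,k)$ of \emph{p-}$\clique$, I would let $m$ be a suitable polynomial in $k$ (large enough for the grid-encoding below), let $\rel F$ be the $(m\times m)$-grid, let $\rel A\in\mathcal{C}$ be a structure with core of tree width at least $w(m)$ obtained as above, and let $\rel A'$ be its core. By Theorem~\ref{thm:excludedgrid} and the facts on minor maps in Section~\ref{sec:prelim}, $\rel F$ is a minor of some connected component of $\mathcal{G}(\rel A')$ of tree width at least $w(m)$; let $\rel B$ be the corresponding substructure of $\rel A'$, with universe $B$, and let $\rel C$ be the substructure of $\rel A'$ induced on $A'\setminus B$, so that $\rel A'$ is the disjoint union, as a structure, of $\rel B$ and the components of $\rel C$. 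As $\mathcal{G}(\rel B)$ is connected, I can compute a minor map $\mu$ from $\rel F$ \emph{onto} $\rel B$; for $a\in B$, let $v(a)\in F$ be the unique vertex of $\rel F$ whose branch set contains $a$. Next, in time polynomial in $\size{\rel G}$, I would build a graph $\rel Y$ with $\rel F\to\rel Y$ if and only if $\rel G$ has a $k$-clique --- the standard grid-encoding of the clique problem~\cite{Grohe2007}, in which disjoint strands of $\rel F$ carry encodings of the $k$ chosen vertices of $\rel G$ (a homomorphism to $\rel Y$ being forced to keep each value constant along its strand) and the crossing regions verify that the corresponding pairs are edges of $\rel G$. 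Finally I would take the target $\rel X$ --- a structure similar to $\rel A$ --- to be the disjoint union of a \emph{gadget part} $\rel X_0$ with universe $B\times Y$ and a \emph{garbage part} equal to a disjoint copy of $\rel C$: the relations of the garbage part are those of $\rel C$, and for each relation symbol $R$ of arity $r$ the relation $R^{\rel X_0}$ consists of all $((b_1,y_1),\dots,(b_r,y_r))$ with $(b_1,\dots,b_r)\in R^{\rel A'}$ such that $y_i=y_j$ whenever $v(b_i)=v(b_j)$, and $\{y_i,y_j\}\in E^{\rel Y}$ whenever $\{v(b_i),v(b_j)\}\in E^{\rel F}$. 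The reduction outputs $(\rel A,\rel X)$. Bounded arity of $\mathcal{C}$ is used exactly here: $r$ being bounded, $R^{\rel X_0}$ has at most $|R^{\rel A'}|\cdot|Y|^{O(1)}$ tuples, so $\rel X$ is computable in time polynomial in $\size{\rel G}$ times a function of $m$, while $\size{\rel A}$ is a function of $m$, hence of $k$. So the map $(\rel G,k)\mapsto(\rel A,\rel X)$ is an fpt-reduction as soon as it is correct, i.e. $\rel A\to\rel X$ if and only if $\rel G$ has a $k$-clique.

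As $\rel A$ is homomorphically equivalent to $\rel A'$, and $\rel A'$ is the disjoint union of $\rel B$ and $\rel C$, correctness reduces to the statement that $\rel B\to\rel X$ if and only if $\rel F\to\rel Y$ --- the components of $\rel C$ always map into the copy of $\rel C$ inside $\rel X$ by the identity. One direction is a direct lift: from a homomorphism $g\colon\rel F\to\rel Y$, the map $a\mapsto(a,g(v(a)))$ sends $\rel B$ into $\rel X_0$, straight from the definition of $R^{\rel X_0}$.

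The converse is the main obstacle; I would argue it as follows. Since $\rel B$ is connected and $\rel X$ is the disjoint union of $\rel X_0$ and the copy of $\rel C$, a homomorphism $h\colon\rel B\to\rel X$ maps $\rel B$ entirely into $\rel X_0$ or entirely into the copy of $\rel C$; the latter would give a homomorphism $\rel B\to\rel C$, hence, with the identity on $\rel C$, a homomorphism from the core $\rel A'$ into its proper substructure $\rel C$ --- impossible. So $h$ has image in $\rel X_0$; writing $h=(\alpha,\beta)$ with $\alpha\colon B\to B$ and $\beta\colon B\to Y$, the map $\alpha$ is an endomorphism of $\rel B$, and $\rel B$, being a Gaifman component of the core $\rel A'$, is itself a core, so $\alpha$ is an automorphism; replacing $h$ by $h\circ\alpha^{-1}$, I may assume $\alpha=\mathrm{id}$. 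Then the first clause in the definition of $R^{\rel X_0}$, together with the facts that each branch set $\mu(v)$ is connected in $\mathcal{G}(\rel B)$ and that Gaifman edges come from shared tuples, forces $\beta$ to be constant on each branch set; and the ``witnessed edge'' property of the minor map --- for every edge $\{v,w\}$ of $\rel F$ some tuple of $\rel B$ contains an element of $\mu(v)$ and an element of $\mu(w)$ --- together with the second clause then shows that the map sending each $v$ to the common value of $\beta$ on $\mu(v)$ is a homomorphism $\rel F\to\rel Y$. This gives the equivalence, hence the correctness of the reduction, and composing with the W[1]-hardness of \emph{p-}$\clique$ yields the theorem. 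The remaining points --- the explicit grid-encoding producing $\rel Y$ and the routine verification of the running-time and parameter bounds of the reduction --- I would fill in using the standard constructions.
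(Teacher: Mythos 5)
Your outer skeleton matches the paper's proof quite closely: enumerate $\mathcal{C}$ to find $\rel A$ whose core has large tree width (cost charged to the parameter), pass to the core $\rel A'$, isolate the connected component $\rel B$ containing the grid minor, handle the remaining components by adding a disjoint "garbage" copy to the target and invoking the core property, normalize the first coordinate of a homomorphism by an automorphism, and use connectivity of branch sets plus the witnessed-edge property of the minor map to force consistency. All of that is sound and is essentially what the paper does. The gap is the step you black-boxed: there is no graph $\rel Y$, computable from $(\rel G,k)$, with $\rel F\to\rel Y$ if and only if $\rel G$ has a $k$-clique, where $\rel F$ is a grid. A grid is bipartite, so $\rel F\to\rel Y$ holds as soon as $\rel Y$ has a single edge (or loop): nothing in a graph homomorphism from $\rel F$ forces values to be constant along a row or column, so your "standard grid-encoding" would have to output an edgeless $\rel Y$ exactly when $\rel G$ has no $k$-clique, i.e., the subroutine producing $\rel Y$ would itself decide \emph{p-}$\clique$ in fpt time. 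Consequently the soundness direction of your inner equivalence ($\rel B\to\rel X\Rightarrow\rel F\to\rel Y\Rightarrow$ $k$-clique) collapses at the second implication, and no choice of $\rel Y$ can repair it while keeping completeness.

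The paper (following Grohe) avoids this by \emph{not} factoring the construction through homomorphisms of the grid into a graph. The clique-encoding data live inside the target built over $\rel A''$ itself: the universe of $\rel M(\rel A'',\mu,\rel G)$ consists of tuples $(v,e,i,p,a)$ with $v$ a vertex of $\rel G$, $e$ an edge, $i\in[k]$ a row, $p\in[K]$ a column of a $(k\times\binom{k}{2})$-grid, subject to the coupling $v\in e\iff i\in p$, and the constraints are imposed \emph{inside the lifted relational tuples} (same row forces same $v$, same column forces same $e$). What one extracts from a homomorphism $\rel A''\to\rel M$ is therefore not a homomorphism of the grid into some graph but a labeling of rows by vertices and columns by edges, consistent on crossing branch sets; surjectivity of the minor map and the coupling condition then force these $k$ vertices to be pairwise adjacent. (The paper's Theorem~\ref{thm:left_hard} does the analogous thing with square grids and the constraints (Cl), (Cr).) Your product gadget $B\times Y$ is close in spirit, but the second component must carry this row/column vertex-edge information with the coupling built into the universe, not merely a vertex of an auxiliary graph receiving the grid; as written, the reduction is incorrect and the central encoding still has to be supplied.
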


We will sketch the proof, but we need some preparation first.

Let $k\geq 2$ and $K=\binom{k}{2}$ and let $\rel A$ be a connected $\sigma$-structure with a $(k\times K)$-grid as a minor in its Gaifman graph. Since $A$ is connected, there is a minor map from the $(k\times K)$-grid onto $\rel A$. We fix such a map $\mu$. We also fix some bijection $\rho$ between $[K]$ and the set of all unordered pairs of elements of $[k]$. For $p \in [K]$ we will write $i\in p$ instead of $i\in\rho(p)$. It will be convenient to switch between viewing the columns of the $(k\times K)$-grid as being indexed by elements of $[K]$ and unordered pairs of distinct elements of $[k]$.

For a graph $\rel G=(G,E^\rel G)$ we will define a $\sigma$-structure $\rel M = \rel M(\rel A,\mu,\rel G)$ such
that there exists a homomorphism from $\rel A$ to $\rel M$ if and only if $\rel G$ contains a $k$-clique. We define the universe of $\rel M$ to be
\begin{align*}
M = \{(v, e,i, p, a) | & v \in G, e \in E^\rel G, \\ & i \in [k], p \in [K] \text{ such that } (v \in e \iff i \in p), \\ &
a \in \mu (i, p)\}.
\end{align*}
We define the \emph{projection} $\Pi :M\to A$ by letting
\[
\Pi (v, e,i, p, a) = a \]
for all $(v, e,i, p, a) \in M$. 

We shall define the relations of $\rel M$ in such a way that $\Pi$
is a homomorphism from $\rel M$ to $\rel A$. For every $R\in\sigma$ , say, of arity $r$, and for all
tuples $\mathbf{a} = (a_1,a_2,\dots, a_r) \in R^\rel A$ we add to $R^\rel M$ all tuples $\mathbf{m} = (m_1,m_2,\dots, m_r)\in\Pi ^{-1}(\mathbf{a})$ satisfying the following two constraints for all $m, m'\in\{m_1,m_2,\dots , m_r\}$:
\begin{itemize}
    \item If $m=(v,e,i,p,a)$ and $m'=(v',e',i,p',a')$, then $v=v'$.
    \item If $m=(v,e,i,p,a)$ and $m'=(v',e',i',p,a')$, then $e=e'$.
\end{itemize}

The following lemmas will guarantee correctness of the reduction in the proof of Theorem~\ref{thm:hardGrohe}.

\begin{lemma}\label{lem:cliquetohom}
    If $\rel G$ contains a $k$-clique, then $\rel A\to\rel M$.
\end{lemma}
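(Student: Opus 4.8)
The plan is to convert a $k$-clique of $\rel G$ into an explicit homomorphism $h\colon\rel A\to\rel M$, using the $k$ clique vertices to populate the first coordinate of elements of $M$ and the $\binom{k}{2}$ clique edges to populate the second coordinate. Concretely, fix $k$ pairwise distinct, pairwise adjacent vertices $u_1,\dots,u_k$ of $\rel G$, so $\{u_i,u_j\}\in E^\rel G$ whenever $i\neq j$. Since $\mu$ is a minor map \emph{onto} $\mathcal G(\rel A)$, the blocks $\mu(i,p)$ with $(i,p)\in[k]\times[K]$ are pairwise disjoint and cover $A$, so each $a\in A$ lies in a unique block, say $a\in\mu(i(a),p(a))$. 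For $p\in[K]$ write $\rho(p)=\{i,j\}$ and set $e_p:=\{u_i,u_j\}\in E^\rel G$. I would then define
\[
h(a):=\bigl(u_{i(a)},\;e_{p(a)},\;i(a),\;p(a),\;a\bigr)\qquad(a\in A).
\]

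The first verification step is that $h$ indeed maps into $M$: clearly $\Pi(h(a))=a$ and $a\in\mu(i(a),p(a))$, and the defining condition "$v\in e\iff i\in p$" holds because $u_{i(a)}\in e_{p(a)}$ iff $i(a)\in\rho(p(a))$ (here the distinctness of the $u_i$ is used), i.e.\ iff $i(a)\in p(a)$. The second step is that $h$ preserves all relations. Fix $R\in\sigma$ of arity $r$ and $\mathbf a=(a_1,\dots,a_r)\in R^\rel A$. Since $\Pi\circ h$ is the identity, the tuple $(h(a_1),\dots,h(a_r))$ lies in $\Pi^{-1}(\mathbf a)$, so by the definition of $R^\rel M$ it remains only to check the two consistency constraints among $h(a_1),\dots,h(a_r)$: if two of these tuples share the same third coordinate $i$ then $i(a_s)=i(a_t)=i$, hence both have first coordinate $u_i$; and if two of them share the same fourth coordinate $p$ then $p(a_s)=p(a_t)=p$, hence both have second coordinate $e_p$. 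Therefore $(h(a_1),\dots,h(a_r))\in R^\rel M$, and $h$ is a homomorphism.

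I do not expect a genuinely hard step here; the two points requiring care are that $h$ be well defined — which is precisely where disjointness and surjectivity of the minor map $\mu$ are needed, so that $i(a)$ and $p(a)$ are unambiguous — and that the "$\iff$" membership condition defining $M$ be respected, which is exactly why the construction records a clique vertex in the first coordinate and a clique edge in the second, and why an honest $k$-clique (rather than fewer vertices) is required, so that every column index $p\in[K]$ has a bona fide edge $e_p$ available.
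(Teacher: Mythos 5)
Your proof is correct and follows exactly the intended construction: using the unique block $\mu(i(a),p(a))$ containing each $a$ (well defined by disjointness and surjectivity of $\mu$) to send $a$ to $(u_{i(a)},e_{p(a)},i(a),p(a),a)$, which is the same clique-vertex/clique-edge assignment used in the paper (and in the analogous completeness claim in the proof of Theorem~\ref{thm:left_hard}). The verification of membership in $M$ (including the use of distinctness of the clique vertices for the ``$\iff$'' condition) and of the two consistency constraints is complete, so nothing is missing.
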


\begin{lemma}\label{lem:homtoclique}
    Suppose that $\rel A$ is a core. If $\rel A\to\rel M$, then $\rel G$ contains a $k$-clique.
\end{lemma}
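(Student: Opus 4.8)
The plan is to use the core hypothesis to normalize the homomorphism, and then to read the $k$-clique off the $\rel G$-coordinates of the target. So suppose $h\colon\rel A\to\rel M$ is a homomorphism. Since the projection $\Pi\colon\rel M\to\rel A$ is a homomorphism by construction, $g:=\Pi\circ h$ is an endomorphism of $\rel A$, and hence an automorphism, since $\rel A$ is a finite core. Replacing $h$ by $h\circ g^{-1}$ (again a homomorphism $\rel A\to\rel M$) we may assume $\Pi\circ h=\mathrm{id}_A$, so that for every $a\in A$ we may write $h(a)=(v_a,e_a,i_a,p_a,a)$. Membership $h(a)\in M$ forces $a\in\mu(i_a,p_a)$, and since the cells $\mu(i,p)$ are pairwise disjoint, $(i_a,p_a)$ is precisely the pair of grid coordinates of the unique cell containing $a$.

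Next I would show that $v_a$ depends only on the row $i_a$ and $e_a$ only on the column $p_a$. Fix $i\in[k]$ and let $S_i=\bigcup_{p\in[K]}\mu(i,p)$. If $\{a,a'\}$ is a Gaifman edge of $\rel A$ with $a,a'\in S_i$, then $a$ and $a'$ occur together in some tuple of some $R^\rel A$; applying $h$ and noting that $h(a)$ and $h(a')$ have the same third coordinate $i$, the first defining constraint of $R^\rel M$ yields $v_a=v_{a'}$. Each $\mu(i,p)$ is connected in $\mathcal{G}(\rel A)$, and the minor-map axiom applied to the grid edge between $(i,p)$ and $(i,p+1)$ provides a Gaifman edge joining $\mu(i,p)$ to $\mu(i,p+1)$; hence $S_i$ is connected in $\mathcal{G}(\rel A)$, so $v_a$ takes a single value $v_i\in G$ on all of $S_i$. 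Symmetrically, using the second constraint of $R^\rel M$ and the vertical grid edges, $e_a$ takes a single value $e_p\in E^\rel G$ on $\bigcup_{i\in[k]}\mu(i,p)$.

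To finish, pick any $a\in\mu(i,p)$ (nonempty, being a cell of a minor map); then $h(a)=(v_i,e_p,i,p,a)\in M$, so the membership condition gives $v_i\in e_p$ if and only if $i\in\rho(p)$. When $k=2$ this already suffices, since $e_1\in E^\rel G$ is a $2$-clique of $\rel G$. For $k\ge3$ I would first check that $v_1,\dots,v_k$ are pairwise distinct: if $v_i=v_j$ with $i\ne j$, choose $i'\notin\{i,j\}$ and set $p=\rho^{-1}(\{i,i'\})$, so that $v_i\in e_p$ while $v_j\notin e_p$, a contradiction. Then for each pair $\{i,j\}$ with $p=\rho^{-1}(\{i,j\})$ we get $v_i,v_j\in e_p$, and since $e_p$ is an edge (a two-element set) and $v_i\ne v_j$, necessarily $e_p=\{v_i,v_j\}\in E^\rel G$. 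Thus $\{v_1,\dots,v_k\}$ is a $k$-clique of $\rel G$.

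The connectivity bookkeeping in the middle step is routine; the load-bearing point is the normalization in the first step, without which the assignment $a\mapsto(i_a,p_a)$ need not cover every grid cell, and neither the propagation of $v$ and $e$ nor the final matching via $v_i\in e_p\iff i\in\rho(p)$ would go through. The other place to be careful is keeping the two grid directions apart — $v$ is propagated through the third coordinate along rows and $e$ through the fourth coordinate along columns — and using the bijection $\rho$ consistently so that the membership predicate of $\rel M$ performs exactly the clique check.
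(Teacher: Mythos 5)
Your proof is correct and follows essentially the same route as the paper's: it uses the core property to normalize so that $\Pi\circ h$ is (up to an automorphism) the identity, then propagates the $G$-vertex along rows and the edge along columns using the connectivity of the cells $\mu(i,p)$ together with the two tuple constraints of $\rel M$, and finally decodes the clique from the condition $v\in e\iff i\in\rho(p)$. This is exactly the argument behind Grohe's construction that the paper relies on, and it mirrors the soundness part of the proof of Theorem~\ref{thm:left_hard} (constancy of $u\langle-\rangle$ and $v\langle-\rangle$ on row- and column-preimages), so no gaps to report.
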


\begin{proof}[Sketch of the proof of Theorem \ref{thm:hardGrohe}]
We will give an fpt-reduction from \emph{p-}$\clique$ to \emph{p-}$\HOM (\mathcal{C},-)$. Let $\rel G$ be a graph and let $k\geq 1$. Let $K=\binom{k}{2}$. By the Theorem \ref{thm:excludedgrid}, there exists a structure $\rel A\in\mathcal{C}$ such that the Gaifman graph of the core of $\rel A$ has the $(k\times K)$-grid as a minor.

We find such an $\rel A$,
compute the core $\rel A'$ of $\rel A$ and a minor map $\mu$ from the $(k \times K)$-grid to $\rel A'$. 
We let $\rel A''$ be the connected component of $\rel A'$ that contains the image of $\mu$. $\rel A''$ is also a core. We can assume, without loss of generality, that $\mu$ is a minor map
from the $(k \times K)$-grid onto $\rel A''$. We let $\rel M '=\rel M (\rel A '',\mu ,\rel G)$.  By Lemma \ref{lem:cliquetohom} and Lemma \ref{lem:homtoclique}, $\rel A ''\to\rel M '$ if and only if $\rel G$ contains a $k$-clique. Let $\rel M$ be a disjoint union of $\rel M'$ and $\rel A' \backslash \rel A ''$. Since $\rel A '$ is a core, every homomorphism from $\rel A'$ to $\rel M$ maps $\rel A ''$ to $\rel M '$. So, $\rel A '\to\rel M$ if and only if $\rel G$ contains a $k$-clique. Since $\rel A '$ is the core of $\rel A$, it means that $\rel A\to\rel M$ if and only if $\rel G$ has a $k$-clique.
\end{proof}

The main result of \cite{Grohe2007}
is the following theorem, which combines the previous two (Theorem \ref{thm:tractabilitydalmau} and Theorem \ref{thm:hardGrohe}). Note that for a decidable class $\mathcal{C}$, \emph{p-}$\HOM (\mathcal{C},-)$ is in W[1], so W[1]-hardness becomes W[1]-completeness.

\begin{theorem}
    Assume that \emph{FPT} $\neq$ \emph{W[1]}. Then, for every recursively enumerable class $\mathcal{C}$ of structures of bounded arity the following statements are 
equivalent:
\begin{itemize}
    \item $\HOM(\mathcal{C},-)$ is in polynomial time.
\item \emph{p-}$\HOM (\mathcal{C},-)$ is fixed-parameter tractable.
\item $\mathcal{C}$ has bounded tree width modulo homomorphic equivalence.
\end{itemize}
If either statement is false, then \emph{p-}$\HOM (\mathcal{C},-)$ is \emph{W[1]}-hard.
\end{theorem}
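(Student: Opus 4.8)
The plan is to prove the cyclic chain $(3)\Rightarrow(1)\Rightarrow(2)\Rightarrow(3)$ among the three listed statements and then read off the last sentence as a consequence. Only the implication $(2)\Rightarrow(3)$ will use the hypothesis $\mathrm{FPT}\neq\mathrm{W[1]}$, and only it invokes Theorem~\ref{thm:hardGrohe}; everything else is either Theorem~\ref{thm:tractabilitydalmau} or a definitional triviality.

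The two easy implications come first. The implication $(3)\Rightarrow(1)$ is exactly Theorem~\ref{thm:tractabilitydalmau}: if $\mathcal{C}$ has bounded tree width modulo homomorphic equivalence, then $\HOM(\mathcal{C},-)$ is solvable in polynomial time (here neither bounded arity nor recursive enumerability of $\mathcal{C}$ is needed). The implication $(1)\Rightarrow(2)$ is immediate from the definitions: a polynomial-time algorithm for $\HOM(\mathcal{C},-)$ decides every instance $x$ in time $|x|^{O(1)}$, which is of the form $f(\kappa(x))\,|x|^{O(1)}$ with $f$ the constant function $1$, so \emph{p-}$\HOM(\mathcal{C},-)$ lies in FPT irrespective of the parameterization $\kappa(x)=\size{\rel A}$.

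For $(2)\Rightarrow(3)$ I argue by contraposition. Suppose $\mathcal{C}$ does \emph{not} have bounded tree width modulo homomorphic equivalence. Since $\mathcal{C}$ is recursively enumerable and of bounded arity, Theorem~\ref{thm:hardGrohe} applies and shows that \emph{p-}$\HOM(\mathcal{C},-)$ is W[1]-hard under fpt-reductions. If this problem were also fixed-parameter tractable, then every problem in W[1] would fpt-reduce to a problem in FPT and hence itself belong to FPT (as FPT is closed under fpt-reductions), forcing $\mathrm{W[1]}=\mathrm{FPT}$ and contradicting the standing assumption. Therefore \emph{p-}$\HOM(\mathcal{C},-)\notin\mathrm{FPT}$, i.e.\ $(2)$ fails. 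This closes the cycle and establishes the equivalence of $(1)$, $(2)$, and $(3)$.

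Finally, for the last assertion, observe that the implications $(3)\Rightarrow(1)\Rightarrow(2)$ were proved without the hypothesis $\mathrm{FPT}\neq\mathrm{W[1]}$, so if any one of the three statements is false then in particular $(3)$ is false, that is, $\mathcal{C}$ does not have bounded tree width modulo homomorphic equivalence; Theorem~\ref{thm:hardGrohe} then applies verbatim and yields that \emph{p-}$\HOM(\mathcal{C},-)$ is W[1]-hard --- and this conclusion needs neither the full equivalence nor the assumption $\mathrm{FPT}\neq\mathrm{W[1]}$. There is essentially no obstacle here: the statement is a repackaging of Theorems~\ref{thm:tractabilitydalmau} and~\ref{thm:hardGrohe}, and the only mildly delicate points are the trivial inclusion of polynomial time in FPT and the standard fact that a W[1]-hard problem lying in FPT collapses W[1] to FPT.
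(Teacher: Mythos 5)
Your proposal is correct and matches the paper's intent exactly: the paper gives no separate proof, noting only that the theorem ``combines'' Theorem~\ref{thm:tractabilitydalmau} and Theorem~\ref{thm:hardGrohe}, and your cycle $(3)\Rightarrow(1)\Rightarrow(2)\Rightarrow(3)$ (with the contrapositive step via W[1]-hardness plus FPT closure under fpt-reductions, and the final W[1]-hardness claim read off from Theorem~\ref{thm:hardGrohe} without the assumption FPT $\neq$ W[1]) is precisely that combination spelled out.
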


\section{Left-hand side restricted PCSP}

General \emph{promise homomorphism problem} is the following parameterized problem:

\medskip

Input: Similar structures $\rel A, \rel B, \rel X$ such that $\rel A \to \rel B$;

Parameter: $\size{A} + \size{B}$;

Output: $\yes$ if $\rel B\to \rel X$; $\no$ if $\rel A\not\to \rel X$.\\

It is a \emph{promise problem} in that the sets of $\yes$ and $\no$ instances do not cover the set of all inputs and there are no requirements on algorithms if the input does not fall into $\yes$ or $\no$ (or, put differently, the computer is promised that the input is $\yes$ or $\no$ and the task is to decide which of the options take place)
\footnote{
In fact, the general homomorphism problem and its restrictions should be also regarded as promise problems -- we are promised that the input is in the expected form; see~\cite{Grohe2007} for a discussion about this issue.  
}. On the other hand, we do require that the set of $\yes$ instances is disjoint from the set of $\no$ instances. For the general promise homomorphism problem, this property is guaranteed by the requirement $\rel A \to \rel B$. 

The definition of an fpt-reduction naturally extends to promise problems: Instead of the first condition, we require that $\yes$-instances are mapped to $\yes$-instances (\emph{completeness}) and that $\no$-instances are mapped to $\no$-instances (\emph{soundness}). Since the definition of $\no$-instances often involves negation (e.g. in the general homomorphism problem), soundness is often shown by proving the contrapositive: if the image is not a $\no$-instance, then neither is the original instance.

We define the left-hand side restricted PCSP as the general promise homomorphism problem restricted to a class of pair of structures $(\rel A,\rel B)$. We will be only concerned with recursively enumerable classes of bounded arity and so we include this requirement.

\begin{definition}
A collection of pairs of similar structures $(\rel A,\rel B)$ such that $\rel A \to \rel B$ is called a \emph{template} if it is recursively enumerable and of bounded arity.

For a template $\Gamma$, the \emph{left-hand side restricted PCSP over $\Gamma$}, denoted $\PHOM(\Gamma,-)$, is the following problem.

\smallskip

Input: Similar structures $\rel A,\rel B,\rel X$ where $(\rel A,\rel B)\in\Gamma$;

Parameter: $\size{A} + \size{B}$;

Output: $\yes$ if $\rel B\to\rel X$; $\no$ if $\rel A\not\to\rel X$.
\end{definition}

It is clear that the left-hand side restricted PCSP is a generalization of the (bounded arity, recursively enumerable) left-hand side restricted CSP: \emph{p-}$\HOM (\mathcal{C},-)$ is equivalent to $\PHOM(\Gamma,-)$ for $\Gamma = \{(\rel A,\rel A) \mid \rel A \in \mathcal{C}\}$. We do not distinguish between $\mathcal{C}$ and the template $\Gamma$ in this situation and call this template a \emph{CSP template}.  

Important examples of left-hand side restricted PCSP are approximation versions of the $\clique$ problems. For a given computable function $f: \mathbb{N} \to \mathbb{N}$ with $f(n)<n$, $n \in \mathbb{N}$, the $f$-$\gapc$ problem is: given a graph $\rel G$ and $k \in \mathbb{N}$ decide whether $\rel G$ has a $k$-clique or not even an $f(k)$-clique. Clearly, $f$-$\gapc$ is equivalent to the $\PHOM(\Gamma,-)$ for $\Gamma = \{(\rel K_{f(k)},\rel K_k) \mid k \in \mathbb{N}\}$. We discuss this class of problems in the last subsection.

\subsection{Homomorphic relaxations}

A simple, but important reduction for the right-hand side restricted PCSP is by means of homomorphic relaxation (see \cite{BBKO21}). A natural left-hand side version of this concept is as follows.

\begin{definition} 
Let $\Gamma$ and $\Delta$ be templates. We say that $\Gamma$ is a \emph{(left-hand side) homomorphic relaxation} of $\Delta$ if for every $(\rel A,\rel B) \in \Gamma$ there exists $(\rel C,\rel D) \in \Delta$ such that all four structures are similar and $\rel A \to \rel C$ and $\rel D \to \rel B$.
\end{definition}

\begin{proposition}
Let $\Gamma$ and $\Delta$ be templates. If $\Gamma$ is a homomorphic relaxation of $\Delta$, then $\PHOM(\Gamma)$ is fpt-reducible to $\PHOM(\Delta)$.
\end{proposition}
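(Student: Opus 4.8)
The plan is to define the reduction directly on instances. Given an instance $(\rel A,\rel B,\rel X)$ of $\PHOM(\Gamma,-)$, the reduction first needs to find a witnessing pair $(\rel C,\rel D)\in\Delta$ together with homomorphisms $\rel A\to\rel C$ and $\rel D\to\rel B$; it then outputs the instance $(\rel C,\rel D,\rel X)$ of $\PHOM(\Delta,-)$. Note that $\rel X$ is passed through unchanged, which is legitimate since all four structures $\rel A,\rel B,\rel C,\rel D$ are similar and hence $\rel X$ is similar to $\rel C$ and $\rel D$ as well.

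First I would verify the three defining properties of an fpt-reduction. For \textbf{completeness}, suppose $(\rel A,\rel B,\rel X)$ is a $\yes$-instance, i.e.\ $\rel B\to\rel X$. Composing $\rel D\to\rel B\to\rel X$ gives $\rel D\to\rel X$, so $(\rel C,\rel D,\rel X)$ is a $\yes$-instance of $\PHOM(\Delta,-)$. For \textbf{soundness} I argue the contrapositive: if $(\rel C,\rel D,\rel X)$ is not a $\no$-instance, then $\rel C\to\rel X$, and composing $\rel A\to\rel C\to\rel X$ yields $\rel A\to\rel X$, so $(\rel A,\rel B,\rel X)$ is not a $\no$-instance either. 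Equivalently, a $\no$-instance of the original problem ($\rel A\not\to\rel X$) maps to a $\no$-instance ($\rel C\not\to\rel X$). For the \textbf{parameter bound}, the new parameter is $\size{\rel C}+\size{\rel D}$, which must be bounded by a computable function of the old parameter $\size{\rel A}+\size{\rel B}$; this follows because, $\Delta$ being of bounded arity, for each pair $(\rel A,\rel B)\in\Gamma$ there are only finitely many pairs $(\rel C,\rel D)\in\Delta$ that could serve as a witness of size below any fixed bound, and we may fix, for each $(\rel A,\rel B)$, one such witness, making $(\rel C,\rel D)$ a function of $(\rel A,\rel B)$ only.

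The main obstacle is the \textbf{effectiveness of finding the witness} $(\rel C,\rel D)$ together with the two homomorphisms within the allotted time $f(\kappa(x))\,|x|^{O(1)}$. Since $\Delta$ is recursively enumerable, I can enumerate its pairs; since checking whether there is a homomorphism between two given finite structures is decidable, I can, for each enumerated $(\rel C,\rel D)$, test whether $\rel A\to\rel C$ and $\rel D\to\rel B$ and halt at the first success. The definition of homomorphic relaxation guarantees this search terminates. The key point is that this entire computation depends only on $(\rel A,\rel B)$ and not on $\rel X$, so its running time is a function of $\kappa(x)=\size{\rel A}+\size{\rel B}$ alone — in particular of the form $f(\kappa(x))\cdot|x|^{O(1)}$ (indeed with no dependence on $|x|$ beyond reading the input) — which is exactly what an fpt-reduction permits. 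I would remark that, as with Grohe's reductions, we do not claim the reduction is polynomial-time; the search for the witness is the source of the possibly non-elementary dependence on the parameter.
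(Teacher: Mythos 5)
Your proposal is correct and follows essentially the same route as the paper: map $(\rel A,\rel B,\rel X)$ to $(\rel C,\rel D,\rel X)$ for a fixed witnessing pair $(\rel C,\rel D)\in\Delta$ and verify completeness and soundness by composing homomorphisms through $\rel D\to\rel B$ and $\rel A\to\rel C$, respectively. The only difference is that the paper dismisses the remaining fpt-reduction requirements as clear, whereas you spell out the witness search via recursive enumerability of $\Delta$ and the fact that its cost depends only on $(\rel A,\rel B)$, i.e.\ only on the parameter.
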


\begin{proof}
We map an instance $\rel A, \rel B, \rel X$ of $\PHOM(\Gamma,-)$ to the instance $\rel C, \rel D, \rel X$ of $\PHOM(\Delta,-)$, where $\rel C, \rel D$ are chosen so that $\rel A \to \rel C$ and $\rel D \to \rel B$.

If $\rel A, \rel B, \rel X$ is a $\yes$ instance of $\PHOM(\Gamma)$, then $\rel B \to \rel X$. Therefore, $\rel D \to \rel X$ as $\rel D \to \rel B$ and homomorphisms compose, so $\rel C, \rel D, \rel X$ is a $\yes$ instance of $\PHOM(\Delta)$. This shows completeness of the reduction.
Similarly, if $\rel C, \rel D, \rel X$ is not a $\no$ instance of $\PHOM(\Delta)$, then $\rel C \to \rel X$, so $\rel A \to \rel X$ (as $\rel A \to \rel C$), thus $\rel A, \rel B, \rel X$ is not a $\no$ instance of $\PHOM(\Gamma)$, showing soundness. The remaining requirements on fpt-reduction are clear.
\end{proof}

By Theorem~\ref{thm:tractabilitydalmau}, $\HOM(\mathcal{C},-)$ is in polynomial time whenever $\mathcal{C}$ is of bounded tree width. We immediately obtain the following corollary.

\begin{corollary} \label{cor:relax}
    If a template $\Gamma$ is a homomorphic relaxation of a CSP template of bounded tree width, then $\PHOM(\Gamma,-)$ is fixed-parameter tractable.
\end{corollary}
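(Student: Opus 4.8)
The plan is to chain the preceding proposition on homomorphic relaxations with the tractability result of Theorem~\ref{thm:tractabilitydalmau}. Let $\Delta$ be a CSP template of bounded tree width of which $\Gamma$ is a homomorphic relaxation, and write $\Delta = \{(\rel C,\rel C) \mid \rel C \in \mathcal{C}\}$ where the class $\mathcal{C}$ has bounded tree width. By the proposition, $\PHOM(\Gamma,-)$ is fpt-reducible to $\PHOM(\Delta,-)$, and since $\Delta$ is a CSP template, $\PHOM(\Delta,-)$ is the same problem as \emph{p-}$\HOM(\mathcal{C},-)$.

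Next I would observe that \emph{p-}$\HOM(\mathcal{C},-)$ lies in FPT. Indeed, bounded tree width is in particular bounded tree width modulo homomorphic equivalence (witnessed by identity homomorphisms), so Theorem~\ref{thm:tractabilitydalmau} supplies a polynomial-time algorithm for $\HOM(\mathcal{C},-)$; a polynomial-time algorithm is a fortiori an $f(\kappa(x))\,|x|^{O(1)}$ algorithm with $f$ constant, hence witnesses membership of \emph{p-}$\HOM(\mathcal{C},-)$ in FPT. Finally I would invoke that FPT is closed under fpt-reductions: composing the reduction with the FPT algorithm for the target problem, and using that the parameter of the image is bounded by a computable function of the parameter of the input, yields an FPT algorithm for $\PHOM(\Gamma,-)$.

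There is essentially no obstacle here; the statement is a direct corollary. The only points deserving a line of justification are that ``bounded tree width'' is a special case of ``bounded tree width modulo homomorphic equivalence'' and that polynomial time is contained in FPT, both immediate. The one mildly delicate ingredient, already implicit in the proof of the proposition, is that the structures $\rel C,\rel D$ demanded by the relaxation can be produced effectively — one enumerates $\Delta$ (recursively enumerable, by the definition of a template) and searches for a pair similar to $(\rel A,\rel B)$ with $\rel A \to \rel C$ and $\rel D \to \rel B$ — so that the map really is an fpt-reduction.
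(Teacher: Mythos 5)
Your argument is exactly the paper's: chain the homomorphic-relaxation proposition (fpt-reduction of $\PHOM(\Gamma,-)$ to $\PHOM(\Delta,-)=$ \emph{p-}$\HOM(\mathcal{C},-)$) with Theorem~\ref{thm:tractabilitydalmau}, noting that polynomial time gives membership in FPT and that FPT is closed under fpt-reductions; the paper states the corollary as an immediate consequence of precisely these two ingredients. Your extra remark on effectively finding $(\rel C,\rel D)$ by enumerating $\Delta$ is a correct (and welcome) elaboration of a point the paper leaves implicit in the proposition's proof.
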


Note that a CSP template $\Gamma$ is a homomorphic relaxation of a CSP template of bounded tree width if and only if $\Gamma$ has bounded tree width modulo homomorphic equivalence. The tractability condition from Corollary~\ref{cor:relax} is by Theorem~\ref{thm:hardGrohe} the only source of fixed-point tractability (assuming FPT $\neq$ W[1]) for CSP templates. We do not have a good reason to believe that this result generalizes to general templates, but neither do we have a counter-example, and the following question thus arises.  

\begin{question}
 Let $\Gamma$ be a template which is not a homomorphic relaxation of a CSP template of bounded tree width. Is then $\PHOM(\Gamma,-)$ necessarily W[1]-hard?   
\end{question}

\subsection{Sufficient condition for hardness}

In this subsection we improve the sufficient condition for W[1]-hardness from Theorem~\ref{thm:hardGrohe} and give some corollaries.

The construction in the proof is largely inspired by Grohe's construction presented in the previous section, but there are several differences. First, we remove some unnecessary components, namely $i,p$ in the definition of $M$. On the other hand, we allow constant number of components of type $v,e$ to add more flexibility.
Third, we formulate the hardness criterion so that it can be directly applied e.g. to all the left-hand side restricted CSPs, not just connected cores. 

The fourth difference is in that we use $(k \times k)$-grids instead of $(k \times \binom{k}{2})$-grids, which makes the construction somewhat more natural. In fact, instead of minor maps from grid, we use a more general concept of grid-like mappings that we now introduce.
We use the following convention. If $\rho$ is a mapping from $C$ to a product $D \times E$, we denote by $\leftcomp{\rho}$ and $\rightcomp{\rho}$ its left and right components, respectively. That is, $\leftcomp{\rho}: C \to D$ and $\rightcomp{\rho}: C \to E$ are such that $\rho(c) = (\leftcomp{\rho}(c),\rightcomp{\rho}(c))$.

\begin{definition}
    Let $\rel C$ be a structure. A mapping $\rho: C \to [k] \times [k]$ is called a \emph{grid-like mapping} from $\rel C$ onto $[k] \times [k]$ if it is surjective and, for each $i \in [k]$, both $(\leftcomp{\rho})^{-1}(\{i\})$ and $(\rightcomp{\rho})^{-1}(\{i\})$ are connected subsets of the Gaifman graph of $\rel C$. 
\end{definition}

Note that a minor map $\mu$ from a $(k \times k)$-grid onto a structure $\rel C$ gives rise to a grid-like mapping $\rho$ from $\rel C$ onto $[k] \times [k]$ by defining $\rho(c)$ as the unique pair $(i,j)$ such that $c \in \mu(i,j)$.

\begin{theorem} \label{thm:left_hard}
    Let $\Gamma$ be a template. Suppose there exists $L \in \mathbb{N}$ such that for every $k \in \mathbb{N}$ the following condition is satisfied:

    \begin{description}
    \item[(*)] 
    There exists $(\rel A,\rel B) \in \Gamma$ and mappings $\rho_1, \rho_2, \dots, \rho_L: B \to [k] \times [k]$ such that for each homomorphism $g: \rel A \to \rel B$ there exists a structure $\rel C$, a homomorphism $h: \rel C \to \rel A$, and $l \in [L]$ such that $\rho_l g h$ is a grid-like mapping  from $\rel C$ onto $[k] \times [k]$. 
    \end{description}

    Then $\PHOM(\Gamma,-)$ is $W[1]$-hard.
\end{theorem}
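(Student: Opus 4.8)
The plan is to give an fpt\nobreakdash-reduction from \emph{p-}$\clique$, which is $W[1]$\nobreakdash-complete~\cite{DOWNEY1995109}; since $W[1]$-hardness is preserved by fpt-reductions, this yields the statement. On input $(\rel G,k)$ of \emph{p-}$\clique$ I first use recursive enumerability of $\Gamma$ to compute a pair $(\rel A,\rel B)\in\Gamma$ together with maps $\rho_1,\dots,\rho_L\colon B\to[k]\times[k]$ witnessing~(*) for this $k$. This computation depends on $k$ only, so it takes time bounded by a function of $k$, and it automatically yields $\size{\rel A}+\size{\rel B}\le g(k)$ for a computable $g$, as required of the parameter of the target problem. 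For the enumeration to terminate one must also know that~(*) is effectively checkable for a candidate $(\rel A,\rel B,\rho_1,\dots,\rho_L)$: there are only finitely many homomorphisms $g\colon\rel A\to\rel B$, so it suffices to decide, for each of them, whether $\rho_l g h$ is a grid-like mapping from $\rel C$ onto $[k]\times[k]$ for some $l\in[L]$ and some homomorphism $h\colon\rel C\to\rel A$, and by restricting attention to structures $\rel C$ of size bounded in terms of $\size{\rel A}$ and $k$ this becomes decidable.

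The core of the reduction is a gadget $\rel X=\rel X(\rel B,\rho_1,\dots,\rho_L,\rel G)$ similar to $\rel A$ and $\rel B$, built along the lines of Grohe's structure $\rel M$ but omitting the grid-index components and instead carrying a constant number ($2L$) of $\rel G$-vertex components. Its universe consists of the tuples $(b;v_1,w_1,\dots,v_L,w_L)$ with $b\in B$ and, for every $l$, vertices $v_l,w_l\in G$ such that $v_l=w_l$ if $\leftcomp{\rho_l}(b)=\rightcomp{\rho_l}(b)$ and $\{v_l,w_l\}\in E^{\rel G}$ otherwise; the projection $\Pi$ forgets the $v_l,w_l$. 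For each relation symbol $R$ of arity $r$ and each $\mathbf b\in R^{\rel B}$ I add to $R^{\rel X}$ every tuple in $\Pi^{-1}(\mathbf b)$ whose entries $(b;v_1,w_1,\dots)$ and $(b';v_1',w_1',\dots)$ satisfy, for all $l$, that $v_l=v_l'$ whenever $\leftcomp{\rho_l}(b)=\leftcomp{\rho_l}(b')$ and $w_l=w_l'$ whenever $\rightcomp{\rho_l}(b)=\rightcomp{\rho_l}(b')$; then $\Pi\colon\rel X\to\rel B$ is a homomorphism. The reduction outputs the instance $(\rel A,\rel B,\rel X)$ of $\PHOM(\Gamma,-)$; since $L$ and the arity of $\Gamma$ are constants and $\rel B$ depends on $k$ only, $\rel X$ has at most $|B|\cdot|G|^{2L}$ elements and the instance is computable in time $f(k)\,|\rel G|^{O(1)}$. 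For completeness, if $\{u_1,\dots,u_k\}$ is a $k$-clique of $\rel G$, then mapping each $b$ to the tuple over $b$ whose $l$-th pair is $(u_{\leftcomp{\rho_l}(b)},u_{\rightcomp{\rho_l}(b)})$ is well defined into the universe of $\rel X$ (each such pair either repeats a vertex or is an edge of $\rel G$, by the clique property) and is a homomorphism $\rel B\to\rel X$ (equal $\rho_l$-coordinates force equal decorations), so the output is a $\yes$ instance.

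For soundness, suppose $\phi\colon\rel A\to\rel X$ and put $g:=\Pi\phi\colon\rel A\to\rel B$. Applying~(*) to $g$ yields $\rel C$, a homomorphism $h\colon\rel C\to\rel A$, and $l\in[L]$ with $\rho_l g h$ grid-like from $\rel C$ onto $[k]\times[k]$. Let $\psi:=\phi h\colon\rel C\to\rel X$, write the $l$-th component pair of $\psi(c)$ as $(v_l(c),w_l(c))$, and set $\alpha:=\leftcomp{\rho_l}gh$, $\beta:=\rightcomp{\rho_l}gh$. Since $\alpha^{-1}(\{i\})$ is connected in the Gaifman graph of $\rel C$ and the defining constraints of $\rel X$ force $v_l$ to take the same value at the two endpoints of any Gaifman edge of $\rel C$ lying inside $\alpha^{-1}(\{i\})$, the map $v_l$ is constant on $\alpha^{-1}(\{i\})$; call its value $u_i$, and symmetrically let $u_j'$ be the constant value of $w_l$ on $\beta^{-1}(\{j\})$. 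All these fibres are nonempty because $\rho_l g h$ is onto. For each $(i,j)$ choose $c$ with $(\alpha(c),\beta(c))=(i,j)$; membership of $\psi(c)$ in the universe of $\rel X$ gives $u_i=u_i'$ when $i=j$ and $\{u_i,u_j'\}\in E^{\rel G}$ when $i\neq j$. Hence $\{u_i,u_j\}\in E^{\rel G}$ for all $i\neq j$, and loop-freeness of $E^{\rel G}$ makes the $u_i$ pairwise distinct, so $\{u_1,\dots,u_k\}$ is a $k$-clique of $\rel G$. Contrapositively, if $\rel G$ has no $k$-clique then $\rel A\not\to\rel X$ and the output is a $\no$ instance, which completes the correctness check. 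The steps I expect to need the most care are the exact shape of $\rel X$ — so that the homomorphism built in the completeness step and the consistency constraints exploited in the soundness step are simultaneously available, which is precisely where allowing a constant number of vertex-components is used — and the effectiveness of checking~(*); the soundness computation itself is then a direct adaptation of Lemma~\ref{lem:homtoclique}, with grid-like mappings in place of minor maps.
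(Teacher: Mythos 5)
Your proposal is correct and follows essentially the same route as the paper's proof: the same gadget $\rel X$ over $B \times (G\times G)^L$ with the constraints (Cl)/(Cr), the same clique-to-homomorphism map for completeness, and the same use of connectivity and surjectivity of the grid-like mapping $\rho_l g h$ for soundness, with the same $|B|\cdot|G|^{2L}$ size bound. The only addition is your explicit discussion of how to effectively compute a witness for (*) from $k$ (which the paper glosses over); your sketch of it is sound, e.g.\ one may replace $\rel C$ by the image of $c \mapsto (h(c),\rho_l g h(c))$ to bound its size by $|A|k^2$.
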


\begin{proof}
    Let $L$ be as in the statement.
    We will give an fpt-reduction from \emph{p-}$\clique$ to $\PHOM (\Gamma,-)$.
    Let $\rel G$ be a graph and let $k \geq 1$. Let $\sigma$-structures $\rel A$ and $\rel B$ and mappings $\rho_i$ be as in (*). We map the instance $(\rel G,k)$ of \emph{p-}$\clique$ to the instance $(\rel A, \rel B, \rel X)$, where $\rel X$ is the $\sigma$-structure constructed as follows. 

    We define the universe of $\rel X$ to be
    \begin{align*}
    X & = \{(b, (u_i,v_i)_{i=1}^L) \mid \ b \in B, \mbox{ for all $i \in [L]$ } (u_i,v_i) \in G \times G \mbox{ such that } \\
      &  \leftcomp{\rho}_i(b) = \rightcomp{\rho_i}(b) \Rightarrow u_i=v_i \mbox{ and } 
       \leftcomp{\rho}_i(b) \neq \rightcomp{\rho_i}(b) \Rightarrow \{u_i,v_i\} \in E^{\rel G}
      \} \\ & \subseteq B \times (G \times G)^{L}.
    \end{align*}
   We define the projection $\Pi: X \to B$ by
   $$
   \Pi(b, (u_i,v_i)_{i=1}^L) = b
   $$
   for all $(b, (u_i,v_i)_{i=1}^L) \in X$. We define the relations of $\rel X$ in such a way that $\Pi$ is a homomorphism from $\rel X$ to $\rel B$. For every symbol $R \in \sigma$, say, of arity $r$, and for all tuples $\tuple{b} = (b_1, b_2, \dots, b_r) \in R^{\rel B}$ we add to $R^{\rel X}$ all tuples $\tuple{x} = (x_1, x_2, \dots, x_r) \in \Pi^{-1}(\tuple{b})$ satisfying the following two constraints for all $(b, (u_i,v_i)_{i=1}^L), (b', (u'_i,v'_i)_{i=1}^L) \in \{x_1, x_2, \dots, x_r\}$ and all $i \in [L]$:
   \begin{description}
       \item[(Cl)] If $\leftcomp{\rho_i}(b) = \leftcomp{\rho_{i}}(b')$, then $u_i = u_i'$ .
       \item[(Cr)] If $\rightcomp{\rho_i}(b) = \rightcomp{\rho_{i}}(b')$, then $v_i = v_i'$ .
   \end{description}
   Note that $\Pi$ is indeed a homomorphism from $\rel X$ to $\rel B$ (even without imposing the constraints (Cl) and (Cr)).

\medskip

   The completeness of the reduction is guaranteed by the following claim.
   \begin{claim}
   If $\rel G$ contains a $k$-clique, then $\rel B \to \rel X$.
   \end{claim}
   \claimproof
      Let $v\langle 1 \rangle, v\langle 2\rangle, \dots, v\langle k\rangle \in G$
      be vertices of a $k$-clique in $\rel G$. We define $h: B \to X$ by
      $$
      h(b) = (b, (v\langle\leftcomp{\rho}_i(b)\rangle,v\langle\rightcomp{\rho}_i(b)\rangle )_{i=1}^L).
      $$
      We need to verify that $h(b)$ is indeed in $X$, i.e., that $\leftcomp{\rho}_i(b)=\rightcomp{\rho_i}(b)$ implies $v\langle \leftcomp{\rho}_i(b)\rangle =v\langle \rightcomp{\rho}_i(b)\rangle $, and that $\leftcomp{\rho}_i(b) \neq \rightcomp{\rho_i}(b)$ implies $\{v\langle \leftcomp{\rho}_i(b)\rangle,v\langle \rightcomp{\rho}_i(b)\rangle \} \in E^{\rel G}$. Both implications are immediate.

      In order to check that $h$ is a homomorphism, take arbitary $R \in \sigma$ and $\tuple{b}=(b_1, b_2, \dots, b_r) \in R^{\rel B}$. Clearly, $h(\tuple{b}) \in \Pi^{-1}(\tuple{b})$. Moreover, for any $j,j' \in [r]$, if $\leftcomp{\rho}_i(b_j) = \leftcomp{\rho}_i(b_{j'})$, then trivially $v\langle\leftcomp{\rho}_i(b_j)\rangle=v\langle\leftcomp{\rho}_i(b_{j'})\rangle$, so (Cl) is satisfied. Similarly, (Cr) is satisfied as well, therefore $h(\tuple{b}) \in R^{\rel X}$. This finishes the proof of the claim.

\medskip

   In order to prove the soundness of the reduction, assume that there exists a homomorphism from $\rel A$ to $\rel X$, say $\alpha: \rel A \to \rel X$. We need to find a $k$-clique in $\rel G$. Let $g = \Pi\alpha$ and let $\rel C$, $h$, and $l \in [L]$ be as in (*), i.e., $\xi$ defined by
   $$\xi = \rho_l\Pi\alpha h$$ 
   is a grid-like mapping from $\rel C$ to $[k] \times [k]$.

   For each $c \in C$, the element $\alpha h(c)$ is of the form $\alpha h(c) = (b, (u_i,v_i)_{i=1}^L) \in X$. We set $u\langle c\rangle = u_l$ and $v\langle c\rangle = v_l$. The definition of $X$ implies the following claim.

   \begin{claim}
   If $\leftcomp{\xi}(c) = \rightcomp{\xi}(c)$, then $u\langle c\rangle =v\langle c\rangle $.    
   If $\leftcomp{\xi}(c) \neq \rightcomp{\xi}(c)$, then $\{u\langle c\rangle ,v\langle c\rangle\} \in E^{\rel G}$.  \end{claim}

   \claimproof
   It is enough to notice that $\leftcomp{\xi}(c) = \leftcomp{\rho_l}\Pi\alpha h(c) = \leftcomp{\rho}_l(b)$ and $\rightcomp{\xi}(c) = \rightcomp{\rho}_l(b)$, so the conclusion indeed follows form the definition of $X$.

   \smallskip
   
   The next claim follows from the constraints (Cl) and (Cr).

   \begin{claim}
      Suppose that $\{c,c'\}$ is an edge in the Gaifman graph of $\rel C$.
      If $\leftcomp{\xi}(c) = \leftcomp{\xi}(c')$, then $u\langle c\rangle =u\langle c'\rangle $.
      If $\rightcomp{\xi}(c) = \rightcomp{\xi}(c')$, then $v\langle c\rangle =v\langle c'\rangle $.
   \end{claim}
   
   \claimproof
   Since $\{c,c'\}$ is an edge in the Gaifman graph of $\rel C$, there exists $R \in \sigma$ and a tuple of the form $(\dots, c, \dots, c', \dots)$ in $R^{\rel C}$. The $(\alpha h)$-image of this tuple, which is of the form
   $$
   (\dots, (b, (u_i,v_i)_{i=1}^L), \dots, (b', (u'_i,v'_i)_{i=1}^L), \dots)
   $$
   is in $R^{\rel X}$ as $\alpha h$ is a homomorphism from $\rel C$ to $\rel X$. 
   
   If $\leftcomp{\xi}(c) = \leftcomp{\xi}(c')$, then $\leftcomp{\rho}_l(b) = \leftcomp{\rho}_l(b')$ (see the proof of the previous claim), and thus $u_l=u'_l$ by (Cl). But then $u\langle c\rangle =u\langle c'\rangle $ by the definition of $u\langle c\rangle $ and $u'\langle c\rangle $. 

   The second part is proved analogously using (Cr) instead of (Cl) and the proof of the claim is concluded.

   \smallskip

   The proof of soundness is now concluded as follows. 
   Since $\xi$ is grid-like, we know that both $(\leftcomp{\xi})^{-1}(\{i\})$ and $(\rightcomp{\xi})^{-1}(\{i\})$ are connected subsets of the Gaifman graph of $\rel C$ for each $i \in [k]$. It follows from the last claim that $u\langle -\rangle $ is constant on $(\leftcomp{\xi})^{-1}(\{i\})$ and $v\langle -\rangle $ is constant on $(\rightcomp{\xi})^{-1}(\{i\})$. In other words, there exist $\uuu{1}$, $\uuu{2}$, \dots, $\uuu{k} \in G$ and $\vvv{1}$,
   $\vvv{2}$, \dots, $\vvv{k} \in G$ such that
   $$
   u \langle c \rangle = \uuu{\leftcomp{\xi}(c)}, \quad v \langle c\rangle = \vvv{\rightcomp{\xi}(c)} .
   $$
   Since $\xi$ is onto $[k] \times [k]$, the preceding claim then gives us $\uuu{i}=\vvv{i}$ for each $i \in [k]$ and $\{\uuu{i},\vvv{j}\} \in E^{\rel G}$ for each $i,j \in [k]$ with $i \neq j$, therefore $\vvv{1}, \dots, \vvv{n}$ is a $k$-clique in $\rel G$. 

   \medskip

   Finally, notice that $|X|$ is at most $|B| \cdot |G|^{2L}$, which is polynomial in $|G|$, and it then easily follows  that the reduction is an fpt-reduction.
\end{proof}

Our first corollary says that Theorem~\ref{thm:left_hard} indeed covers Grohe's hardness result, Theorem~\ref{thm:hardGrohe}.

\begin{corollary}
Let $\mathcal{C}$ be a recursively enumerable class of structures of bounded arity that does not have bounded tree width modulo homomorphic equivalence. Then $\Gamma = \{(\rel A,\rel A) \mid \rel A \in \mathcal{C}\}$ satisfies the assumptions of Theorem~\ref{thm:left_hard} with $L=1$. 
\end{corollary}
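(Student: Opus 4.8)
The plan is to verify condition (*) of Theorem~\ref{thm:left_hard} with $L = 1$. Since $\mathcal C$ is recursively enumerable and of bounded arity, $\Gamma$ is a template, so the only thing to do is to produce, for each $k \in \mathbb{N}$, a pair $(\rel A,\rel A) \in \Gamma$ together with one mapping $\rho_1 : A \to [k] \times [k]$ that works for all endomorphisms of $\rel A$. I would start from the remark that the hypothesis on $\mathcal C$ is equivalent to saying that the cores of the structures in $\mathcal C$ have unbounded tree width: if every core had tree width at most $w$, then every structure of $\mathcal C$, being homomorphically equivalent to its core, would witness bounded tree width modulo homomorphic equivalence. So, fixing $k$ and taking $w(k)$ from the Excluded Grid Theorem (Theorem~\ref{thm:excludedgrid}), I pick $\rel A \in \mathcal C$ whose core $\rel A'$ has $tw(\rel A') \ge w(k)$; then the $(k \times k)$-grid is a minor of $\mathcal G(\rel A')$. (Such an $\rel A$ is even found effectively by enumerating $\mathcal C$ and computing cores and tree widths, but condition (*) only asks for existence.)

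Next I would pass to a connected component. A minor map from the connected $(k \times k)$-grid has connected image, hence lands in a single connected component $\rel A''$ of $\rel A'$; since $\mathcal G(\rel A'')$ is connected, we may pick a minor map $\mu$ from the $(k \times k)$-grid \emph{onto} $\rel A''$. As observed right after the definition of grid-like mappings, $\mu$ yields a grid-like mapping $\rho : A'' \to [k] \times [k]$ from $\rel A''$ onto $[k] \times [k]$ --- the fibers $(\leftcomp{\rho})^{-1}(\{i\}) = \bigcup_{j} \mu(i,j)$ and $(\rightcomp{\rho})^{-1}(\{j\}) = \bigcup_{i} \mu(i,j)$ are connected in $\mathcal G(\rel A'')$ because the rows and columns of the grid are. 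Now fix a retraction $r : \rel A \to \rel A'$ onto the core (a homomorphism with $r|_{A'} = \mathrm{id}_{A'}$; it exists by the usual argument that an endomorphism of a core is an automorphism), extend $\rho$ arbitrarily to $\rho' : A' \to [k] \times [k]$, and set $\rho_1 = \rho' \circ r : A \to [k] \times [k]$. This is the data of (*).

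It remains, given a homomorphism $g : \rel A \to \rel A$, to find $\rel C$, a homomorphism $h : \rel C \to \rel A$, and (necessarily) $l = 1$ making $\rho_1 g h$ grid-like onto $[k] \times [k]$. I would take $\rel C = \rel A''$. The key move is that $\phi := r \circ g|_{\rel A'}$ is an endomorphism of the core $\rel A'$, hence an automorphism, so I can let $h$ be $\phi^{-1}|_{A''}$ followed by the inclusion $\rel A' \hookrightarrow \rel A$; this is a homomorphism $\rel A'' \to \rel A$. Then for $a \in A'' = C$ one computes $r(g(h(a))) = r(g(\phi^{-1}(a))) = \phi(\phi^{-1}(a)) = a$, so $\rho_1(g(h(a))) = \rho'(a) = \rho(a)$; that is, $\rho_1 \circ g \circ h = \rho$ as a map $C \to [k] \times [k]$, which is grid-like from $\rel C$ onto $[k] \times [k]$ by construction.

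The step I expect to be the crux is the definition of $\rho_1$. Since $g$ ranges over all endomorphisms of $\rel A$, we have no control over the automorphism $\phi$ it induces on the core; defining $\rho_1$ as $\rho' \circ r$ is precisely what allows $h$ to cancel $\phi$ and recover the fixed grid-like mapping $\rho$. The attached subtlety is that this forces the argument to take place inside the connected component $\rel A''$ rather than inside all of $\rel A'$, because extending $\rho$ to $\rho'$ breaks connectivity of the fibers outside $\rel A''$; choosing $\mu$ onto $\rel A''$ is what repairs this.
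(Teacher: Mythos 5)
Your proposal is correct and follows essentially the same route as the paper: find a structure whose core's Gaifman graph contains the $(k\times k)$-grid as a minor, pass to the connected component $\rel A''$ with a minor map onto it, define $\rho_1$ by composing the induced grid-like map (arbitrarily extended to $A'$) with a homomorphism onto the core, and for a given $g$ cancel the induced core automorphism via its inverse to take $\rel C=\rel A''$ and recover the fixed grid-like map. The only cosmetic difference is that you fix a retraction $r$ onto the core and invert $\phi=r\circ g|_{A'}$ directly, whereas the paper works with arbitrary homomorphisms $\alpha,\beta$ between $\rel A$ and $\rel A'$ and a right inverse $\gamma$ of $\alpha g\beta$; the argument is the same.
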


\begin{proof}
Let $k \in \mathbb{N}$. By Theorem~\ref{thm:excludedgrid}, there exists $(\rel A, \rel A) \in \Gamma$ such that the Gaifman graph of the core of $\rel A$ has the $(k \times k)$-grid as a minor. Let $\rel A'$ be the core of $\rel A$, $\alpha$ a homomorphism from $\rel A$ to $\rel A'$, $\beta$ a homomorphism from $\rel A'$ to $\rel A$, and $\mu$ a minor map from the $(k \times k)$-grid to $\rel A'$. Let $\rel A''$ be the connected component of $\rel A'$ that contains the image of $\mu$ and assume, without loss of generality, that $\mu$ is onto $\rel A''$. Let $\nu$ be the induced grid-like mapping from $\rel A''$ onto $[k] \times [k]$, that is, $\nu(a)$ is the unique pair $(i,j)$ such that $a \in \mu(i,j)$. We extend $\nu$ arbitrarily to $A'$ and define $\rho_1 = \nu\alpha$.

Let $g$ be a homomorphism from $\rel A$ to $\rel A$. Since $\rel A'$ is a core, the homomorphism $\alpha g \beta$ 
has a right inverse $\gamma: \rel A' \to \rel A'$, i.e., $\alpha g \beta \gamma$ is the identity on $A'$. We define $\rel C = \rel A''$ and $h = \beta \gamma \iota$, where $\iota: \rel A'' \to \rel A'$ is the inclusion map. Now $\rho_1 g h = \nu \alpha g \beta \gamma \iota = \nu \iota$, which is  grid-like by the choice of $\nu$, and condition (*) is verified. 
\end{proof}

The next corollary shows that Theorem~\ref{thm:left_hard} goes beyond the left-hand side restricted CSPs. The assumptions could be made slightly weaker, but they are, in any case, rather restrictive. 

\begin{corollary}
Let $\Gamma$ be a template and $L \in \mathbb{N}$ such that 
\begin{itemize}
    \item $\{\rel A: (\rel A,\rel B) \in \Gamma, \  \rel A \mbox{ is connected}\}$ does not have bounded tree with modulo homomorphic equivalence and 
    \item for every $(\rel A,\rel B)$ there exist injective homomorphisms $g_1, g_2, \dots, g_L: \rel A \to \rel B$ such that, for every homomorphism $g: \rel A \to \rel B$, we have $gh=g_l$ for some homomorphism $h: \rel A \to \rel A$ and some $l \in [L]$. 
\end{itemize}
Then $\Gamma$ satisfies the assumptions of Theorem~\ref{thm:left_hard}. 
\end{corollary}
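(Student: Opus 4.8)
The plan is to verify condition (*) of Theorem~\ref{thm:left_hard} with the constant $L$ supplied by the second hypothesis. Fix $k \in \mathbb{N}$. Saying that the class $\{\rel A : (\rel A,\rel B) \in \Gamma,\ \rel A \text{ connected}\}$ does not have bounded tree width modulo homomorphic equivalence is the same as saying that the cores of its members have unbounded tree width, so by the Excluded Grid Theorem (Theorem~\ref{thm:excludedgrid}) there is a pair $(\rel A,\rel B) \in \Gamma$ with $\rel A$ connected whose core $\rel A'$ has the $(k \times k)$-grid as a minor in its Gaifman graph. I would first record two routine facts: (a) $\rel A'$ is connected, being a surjective homomorphic image of the connected structure $\rel A$; and (b) there is a retraction $e : \rel A \to \rel A'$, i.e.\ a homomorphism with $e|_{A'} = \mathrm{id}_{A'}$. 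Both follow from the standard observation that every endomorphism of a finite core is an automorphism. Connectedness of $\rel A'$ lets us fix a minor map $\mu$ from the $(k \times k)$-grid \emph{onto} $\rel A'$; let $\nu : A' \to [k] \times [k]$ be the grid-like mapping from $\rel A'$ onto $[k]\times[k]$ it induces.

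Next I would invoke the second hypothesis to obtain injective homomorphisms $g_1, \dots, g_L : \rel A \to \rel B$ such that every homomorphism $g : \rel A \to \rel B$ factors as $g h' = g_l$ for some homomorphism $h' : \rel A \to \rel A$ and some $l \in [L]$. For each $l$, define $\rho_l : B \to [k] \times [k]$ by $\rho_l(b) = \nu(e(g_l^{-1}(b)))$ for $b$ in the image $g_l(A)$ — this is well defined precisely because $g_l$ is injective, which is where that assumption is used — and arbitrarily on $B \setminus g_l(A)$; thus $\rho_l g_l = \nu e$ as maps $A \to [k]\times[k]$.

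Now given an arbitrary homomorphism $g : \rel A \to \rel B$, choose $h'$ and $l$ with $g h' = g_l$, and set $\rel C = \rel A'$ and $h = h'\iota$, where $\iota : \rel A' \hookrightarrow \rel A$ is the inclusion. Then
\[
\rho_l\, g\, h \;=\; \rho_l\, g\, h'\, \iota \;=\; \rho_l\, g_l\, \iota \;=\; \nu\, e\, \iota \;=\; \nu ,
\]
using $e\iota = \mathrm{id}_{A'}$ in the last step; and $\nu$ is grid-like from $\rel C = \rel A'$ onto $[k]\times[k]$. Hence (*) holds for this $k$, and since $k$ was arbitrary and $L$ is fixed, $\Gamma$ satisfies the assumptions of Theorem~\ref{thm:left_hard}.

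The one genuinely delicate point — and the reason the freedom to choose an auxiliary structure $\rel C$ in (*) is needed at all — is that $\nu e : \rel A \to [k]\times[k]$ is \emph{not} in general a grid-like mapping of $\rel A$: the retraction $e$ can identify vertices and thereby disconnect the fibres. Pre-composing with the inclusion $\iota$ restricts everything back to the core, where $\nu$ is grid-like by construction, which is exactly why the correct choice is $h = h'\iota$ rather than $h'$ itself. The remaining ingredients — connectedness of the core, existence of the retraction, well-definedness of the $\rho_l$, and the fact that the induced map is an fpt-reduction — are routine, the last being already handled inside Theorem~\ref{thm:left_hard}.
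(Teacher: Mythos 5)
Your proof is correct, and it verifies condition (*) in essentially the same way as the paper: use the Excluded Grid Theorem to get a connected left-hand structure with a large grid minor, and exploit injectivity of the $g_l$ to choose $\rho_l$ so that $\rho_l g_l$ equals the grid-like map, after which the factorization $gh'=g_l$ immediately yields (*). The only difference is your detour through the core: the paper observes that failure of bounded tree width modulo homomorphic equivalence in particular forces the connected structures $\rel A$ themselves to have unbounded tree width, so one can take the minor map onto $\rel A$ directly, set $\rho_l$ with $\rho_l g_l = \nu$, and satisfy (*) with $\rel C=\rel A$ and the homomorphism $h\colon \rel A\to\rel A$ supplied by the hypothesis---no core, retraction $e$, or inclusion $\iota$ is needed. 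Consequently, your remark that the freedom to choose $\rel C$ is ``needed at all'' here is an artifact of routing through the core (where $\nu e$ may indeed fail to be grid-like on $\rel A$); in the paper's argument for this corollary $\rel C=\rel A$ suffices, and the auxiliary $\rel C$ only becomes essential in the corollary recovering Grohe's hardness theorem. Your extra steps (connectedness of the core, existence of the retraction via the standard fact that endomorphisms of finite cores are automorphisms) are all correct, just not necessary.
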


\begin{proof}
For each $k$ we use Theorem~\ref{thm:excludedgrid} to find $(\rel A, \rel B) \in \Gamma$ such that $\rel A$ is connected and has a $(k\times k)$-grid as a minor. This gives us a grid-like mapping $\nu$ from $\rel A$ onto the $(k \times k)$-grid. For each $l \in [L]$ we take any $\rho_l$ such that $\rho_lg_l = \nu$, which is possible as $g_l$ is injective. For each homomorphism $g: \rel A \to \rel B$, we have $gh=g_l$ for some $h$ and $l$. Since $\rho_lgh = \rho_lg_l = \nu$, the condition (*) is satisfied with $\rel C = \rel A$.   
\end{proof}

The assumptions of the last corollary in particular require a ``small number'' of homomorphisms from $\rel A$ to $\rel B$. Our final observation is that Theorem~\ref{thm:left_hard} can be sometimes applied when the ``number'' of homomorphisms is not bounded by a constant.

\begin{example} 
In this example it will be convenient to shift the vertex set of a  $(k\times k)$-grid to $\{0,1, \dots, k-1\} \times \{0,1, \dots, k-1\}$ (and change the definition of grid-like mapping accordingly).

Let $\sigma$ be a signature containing a binary symbol $R$ and let $\Gamma = \{(\rel A_k, \rel B_k)|k\in\mathbb{N}\}$, where $A_k$ with $R^{\rel A_k}$ is a $(k \times k)$-grid (in particular, $A_k = \{0,1, \dots, k-1\}^2$),
$B_k$ with $R^{\rel B_k}$ is an $(f(k) \times f(k))$-grid for some $f(k) \geq k$, and every homomorphism from $\rel A_k$ to $\rel B_k$ is injective. 

We show that the assumptions of Theorem~\ref{thm:left_hard} are satisfied with $L=1$. For each $k$ we take $(\rel A,\rel B) = (\rel A_k,\rel B_k)$ and set $\rho_1(i,j) = (i \mod k, j \mod k)$ where $(i,j) \in B_k$.
Because of $R$, every (injective) homomorphism $g: \rel A \to \rel B$ is of the form $g(i,j) = (l+si,l'+s'j)$ for some $l,l' \in \{0,1, \dots \}$ and $s,s' \in \{-1,1\}$, or of the form $g(i,j) = (l+sj,l'+s'i)$. In all the cases, $\rho_1g$ is a grid-like mapping from $\rel A$ onto $\{0,1, \dots, k-1\} \times \{0,1, \dots, k-1\}$, so (*) is satisfied with $\rel C = \rel A$ and the identical $h$.
\end{example}

\subsection{Approximating clique}

In this final subsection we briefly discuss the $f$-$\gapc$ problems. Recall that $f: \mathbb{N} \to \mathbb{N}$ is a  function such that $f(n) \leq n$ for each $n \in \mathbb N$ and that $f$-$\gapc$ is equivalent to $\PHOM(\{(\rel K_{f(k)},\rel K_k) \mid k \in \mathbb{N}\})$. In this subsection, we implicitly assume that all functions and sets are computable.

For the identity function $f$, $f$-$\gapc$ is \emph{p}-$\clique$, it is therefore a $W[1]$-hard problem. A well known open question is how small can $f$ be made.

\begin{question} \label{q:gap_clique}
    For what functions $f$ is $f$-$\gapc$ W[1]-hard? Is it W[1]-hard for any unbounded function $f$?
\end{question}

Note that $f$-$\gapc$ can be used, instead of $\clique$, as a starting point in the proof of Theorem~\ref{thm:left_hard}. It follows from the proof that, whenever $f$-$\gapc$ is W[1]-hard, (*) can be weakened to ``\dots $\rho_lgh$ is a grid-like mapping from $\rel C$ onto $K \times K$ with $K \geq f(k)$'' (where the definition of a grid-like mapping is naturally extended). 

A recent breakthrough toward answering Question~\ref{q:gap_clique} is the following result of Lin~\cite{Lin2021ConstantAK}.

\begin{theorem} \label{thm:approx_clique}
    For any $0<c \leq 1$, the problem  $f$-$\gapc$ is W[1]-hard whenever $f(n) \geq cn$ for all $n \in \mathbb{N}$.
\end{theorem}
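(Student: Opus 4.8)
The plan is to give an fpt-reduction from \emph{p-}$\clique$, which is $W[1]$-complete \cite{DOWNEY1995109}, to $f$-$\gapc$, following the strategy of Lin \cite{Lin2021ConstantAK}. Since $f(n)\geq cn$, it is enough to prove that for every constant $C\geq 1$ the following gap problem is $W[1]$-hard: given a graph $\rel G$ and $k\in\mathbb{N}$, distinguish ``$\rel G$ has a $k$-clique'' from ``$\rel G$ has no $\lceil k/C\rceil$-clique''. Indeed, taking $C=\lceil 1/c\rceil$ we get $\lceil k/C\rceil\leq ck\leq f(k)$, so a graph with no $\lceil k/C\rceil$-clique has no $f(k)$-clique either, and the identity map is an fpt-reduction from the $C$-gap problem to $f$-$\gapc$. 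Concretely, the task is to build from $(\rel G,k)$ a graph $\rel H=\rel H(\rel G,k)$ and an integer $N=N(k)$ with (i) if $\rel G$ has a $k$-clique then $\rel H$ has an $N$-clique; (ii) if $\rel G$ has no $k$-clique then $\rel H$ has no $\lceil N/C\rceil$-clique; and (iii) $\rel H$ computable in time $g(k)\cdot|\rel G|^{O(1)}$ with $N$ bounded by a computable function of $k$.

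The construction would proceed in two stages. In the \emph{base} stage I would first pass to a partitioned/multicolored variant of $k$-clique (fpt-equivalent to \emph{p-}$\clique$) and encode a candidate rainbow clique as a string to be inspected by many ``local checkers''. Using a good error-correcting code — a Reed--Solomon code, or more generally a code of relative minimum distance close to $1$ — whose coordinates index the checkers and whose symbols index the admissible local views, take the vertices of an auxiliary graph to be (coordinate, symbol) pairs, joined by an edge exactly when the two local views are mutually consistent and jointly consistent with some edge of $\rel G$. A rainbow $k$-clique in $\rel G$ then gives a ``full codeword clique'' of size equal to the blocklength, while if $\rel G$ has no rainbow $k$-clique, any clique must disagree on a positive fraction of coordinates (by the distance of the code), so its size shrinks by a fixed factor $\gamma>1$. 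This yields an equivalent gap-clique instance with a constant gap $\gamma$.

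In the \emph{amplification} stage I would iterate a graph product: replacing a gap-$\gamma$ instance $(\rel H,N)$ by (a sparsified subgraph of) the $t$-fold lexicographic power of $\rel H$, whose clique number is multiplicative, turns the gap $\gamma$ into $\gamma^{t}$. Since $\gamma$ is a constant strictly larger than $1$, a \emph{constant} number $t$ of steps already pushes the gap past $C$; the parameter $N^{t}$ stays a computable function of $k$, and — this is the delicate point — provided each product step is realised in a derandomised, sparsified form, so that it blows the vertex count up only polynomially in the \emph{current} instance size rather than exponentially in $k$, the overall vertex count stays polynomial in $|\rel G|$. Composing the two stages and verifying completeness and soundness through each composition (soundness argued contrapositively, as is standard for promise problems) gives the required fpt-reduction, and the statement for all $0<c\leq 1$ then follows as above.

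The main obstacle is precisely this blow-up control: the naïve lexicographic power already raises the vertex count to $|\rel G|^{\Theta(k)}$, which is not an fpt-reduction, so the whole weight of the argument lies in realising the product — and already the base gadget — via an explicit combinatorial/algebraic object (a design, disperser, or code) that produces multiplicative gap behaviour with only polynomial blow-up per step. In particular, extracting even a \emph{single} constant-factor gap with polynomial blow-up, i.e.\ $W[1]$-hardness of gap clique for \emph{some} constant $>1$, is the crux of the matter; boosting one fixed constant to an arbitrary constant is comparatively routine once that is in hand.
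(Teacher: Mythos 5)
There is a genuine gap here, and it is exactly the one you name yourself in your last paragraph. The statement is not proved in the paper at all: it is quoted as Lin's theorem~\cite{Lin2021ConstantAK}, precisely because obtaining \emph{any} constant-factor gap for clique with an fpt-reduction was a long-standing open problem. Your proposal reduces the theorem to ``first obtain W[1]-hardness of gap clique for \emph{some} constant $\gamma>1$ with only polynomial blow-up, then amplify,'' but the base stage is only gestured at: you posit a code-based consistency graph and assert that, when $\rel G$ has no rainbow $k$-clique, ``any clique must disagree on a positive fraction of coordinates (by the distance of the code), so its size shrinks by a fixed factor.'' This is the step that fails for naive instantiations: a large clique in the consistency graph is a pairwise-consistent family of (coordinate, symbol) pairs, and pairwise consistency does not by itself force the local views to come from a single codeword, nor does the minimum distance of the code directly bound the size of such a family; designing a gadget for which this soundness argument actually goes through, while keeping the instance size $g(k)\cdot|\rel G|^{O(1)}$, is the entire technical content of Lin's proof. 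As written, your argument assumes the theorem's essential content rather than establishing it.

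Two smaller points. First, your worry that the lexicographic power blows the size up to $|\rel G|^{\Theta(k)}$ is misplaced for the use you make of it: to boost a fixed constant gap $\gamma$ to an arbitrary constant $C$ you only need the $t$-fold power with $t=\lceil \log_{\gamma} C\rceil$, a constant, so the blow-up is $|\rel H|^{t}$, polynomial, and the parameter becomes $N^{t}$, a computable function of $k$; no derandomized or sparsified product is needed. (This confirms your own closing assessment that the amplification is routine and the base gap is the crux.) Second, the reduction from the $C$-gap version to $f$-\gapc\ needs a slightly more careful statement than ``the identity map,'' since the $f$-\gapc\ instance must be of the form $(\rel K_{f(k')},\rel K_{k'},\rel H)$ for the parameter value $k'$ produced by your construction, and the rounding in $\lceil k'/C\rceil\le f(k')$ should be checked for all $k'$, not just asymptotically; this is minor, but the main gap above is not.
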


The result was further improved in~\cite{CSK22}.

\smallskip

Another natural question is what happens if we consider templates $\{(\rel K_{f(k)},\rel K_k)\}$ where $k$ runs through some infinite set instead of the whole $\mathbb{N}$. Here is a simple observation in this direction.

\begin{proposition} \label{prop:sparse}
Let $f,g: \mathbb{N} \to \mathbb{N}$ be functions such that $f(n),g(n) < n$ for each $n \in \mathbb{N}$, and let $L \subseteq \mathbb{N}$.
Suppose that for every $k \in \mathbb{N}$ there exists $l \in L$ such that $g(l) \geq (l/k + 1) f(k)$. Then $f$-$\gapc$ is fpt-reducible to $\PHOM(\{(\rel K_{g(l)},\rel K_l) \mid l \in L\})$.
\end{proposition}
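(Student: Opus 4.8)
The plan is to give a direct fpt-reduction from $f$-$\gapc$ to $\PHOM(\{(\rel K_{g(l)},\rel K_l) \mid l \in L\})$. Given an instance $(\rel G, k)$ of $f$-$\gapc$, I first use the hypothesis to pick $l \in L$ with $g(l) \geq (l/k + 1)f(k)$; note this choice depends only on the parameter $k$, so it costs only $f(\kappa(x))$ time and keeps the new parameter bounded by a function of $k$. The reduction then outputs the triple $(\rel K_{g(l)}, \rel K_l, \rel X)$ where $\rel X$ is a blow-up of $\rel G$: replace each vertex $v$ of $\rel G$ by $\lceil l/k \rceil$ copies, make two copies of distinct original vertices $u \neq v$ adjacent iff $\{u,v\} \in E^{\rel G}$, and make copies of the same vertex non-adjacent (an independent set). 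Equivalently $\rel X = \rel G \times \rel K_{\lceil l/k \rceil}$ in the tensor/categorical sense, or the lexicographic-type product; the key combinatorial fact is that $\rel X$ contains an $m\lceil l/k\rceil$-clique iff $\rel G$ contains an $m$-clique, since a clique in $\rel X$ can use at most one copy of each original vertex and the originals used must form a clique.

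The two things to verify are completeness and soundness. For completeness: if $(\rel G,k)$ is a $\yes$-instance, $\rel G$ has a $k$-clique, so $\rel X$ has a $k \lceil l/k \rceil \geq l$-clique, hence $\rel K_l \to \rel X$, so the output is a $\yes$-instance. For soundness (contrapositive): if $\rel K_{g(l)} \to \rel X$, then $\rel X$ contains a clique of size $g(l)$; by the blow-up property this forces $\rel G$ to contain a clique of size at least $\lceil g(l) / \lceil l/k \rceil \rceil$. The inequality $g(l) \geq (l/k + 1)f(k) = \frac{l+k}{k}f(k)$ should be arranged to give $g(l)/\lceil l/k\rceil \geq f(k)$, so $\rel G$ has an $f(k)$-clique and $(\rel G,k)$ is not a $\no$-instance of $f$-$\gapc$. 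I would also double-check the promise is preserved, i.e.\ $\rel K_{g(l)} \to \rel K_l$, which holds because $g(l) < l$ — wait, here we need $g(l) \leq l$ for the pair to be a valid template element, but the statement only assumes $g(n) < n$, so $g(l) < l \leq l$ and $\rel K_{g(l)} \to \rel K_l$ holds; this is consistent.

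The main obstacle, and the only genuinely delicate point, is the ceiling arithmetic: one must check that $g(l) \geq (l/k+1)f(k)$ really does imply $\lceil g(l)/\lceil l/k\rceil\rceil \geq f(k)$. Writing $t = \lceil l/k \rceil \leq l/k + 1$, we need $g(l) \geq t\,f(k)$, and since $g(l)$ and $f(k)$ are integers and $g(l) \geq (l/k+1)f(k) \geq t f(k)$ (as $t \le l/k+1$ and $f(k)\ge 0$), this goes through; I should present this cleanly. The remaining requirements of an fpt-reduction — that $\rel X$ is computable in time $f(k)\cdot|\rel G|^{O(1)}$ (it has $\lceil l/k\rceil |G|$ vertices, and $\lceil l/k\rceil$ is a function of $k$) and that the new parameter $\size{\rel K_{g(l)}} + \size{\rel K_l}$ is bounded by a function of $k$ (it depends only on $l$, which depends only on $k$) — are then routine and I would dispatch them in one sentence.
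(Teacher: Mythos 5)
There is a genuine gap, and it is in the construction itself: your blow-up uses independent sets where it must use cliques. You replace each vertex of $\rel G$ by $\lceil l/k\rceil$ copies and make copies of the \emph{same} vertex non-adjacent. Then any clique of $\rel X$ contains at most one copy of each original vertex, so the clique number of $\rel X$ equals the clique number of $\rel G$. Your ``key combinatorial fact'' (that $\rel X$ has an $m\lceil l/k\rceil$-clique iff $\rel G$ has an $m$-clique) is therefore false --- the very justification you give for it, ``a clique in $\rel X$ can use at most one copy of each original vertex,'' proves that blowing up this way does not increase the clique number at all. Consequently completeness fails: a $k$-clique in $\rel G$ yields only a $k$-clique in $\rel X$, not an $l$-clique, so $\rel K_l \to \rel X$ need not hold and $\yes$-instances are not mapped to $\yes$-instances. (The parenthetical identification with the categorical/tensor product $\rel G \times \rel K_{\lceil l/k\rceil}$ describes yet another graph, whose clique number is $\min$ of the two clique numbers; it fails for the same reason. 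Soundness, by contrast, is unharmed by your construction --- it is the $\yes$ direction that breaks.)

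The fix is essentially a one-word change: make the $\lceil l/k\rceil$ copies of each vertex mutually \emph{adjacent}, i.e., take the lexicographic product of $\rel G$ with a clique $\rel K_{\lceil l/k\rceil}$. Then a $k$-clique in $\rel G$ gives a clique of size $k\lceil l/k\rceil \geq l$ in $\rel X$, and a $g(l)$-clique in $\rel X$ uses at most $t=\lceil l/k\rceil$ copies per original vertex, so the original vertices it touches form a clique of size at least $\lceil g(l)/t\rceil \geq f(k)$, since $g(l) \geq (l/k+1)f(k) \geq t f(k)$ --- the ceiling arithmetic you flagged is indeed fine, but only after the adjacency is corrected. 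With that correction your argument is the mirror image of the paper's proof, which instead takes $m$ disjoint copies of $\rel G$ (with $m$ the least integer satisfying $mk \geq l$, so $m \leq l/k+1$) and makes all pairs of vertices in different copies adjacent; completeness places the same $k$-clique in each copy, and soundness takes the largest intersection of a $g(l)$-clique with a single copy, of size at least $g(l)/m \geq f(k)$. Your observations about computing $l$ from $k$ alone, the parameter bound, and the promise $g(l)<l$ are all correct and routine, as in the paper.
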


\begin{proof}
  Given an instance $(\rel K_{f(k)}, \rel K_{k}, \rel G)$ of $f$-$\gapc$ we find $l \in L$ such that $g(l) \geq (l/k+1) f(k)$ and take the smallest integer $m$ such that $mk \geq l$. Note that $m \leq l/k + 1$. We map the given instance to the instance $(\rel K_{g(l)}, \rel K_l, \rel H)$ where
  $\rel H$ is obtained by taking $m$ disjoint copies of $\rel G$ and making all pairs of vertices in different copies adjacent.

  If $\rel G$ contains a $k$-clique, then $\rel H$ contains an $mk$-clique for we can take the same clique in every copy of $\rel G$ in $\rel H$ to get such a clique.
  Since $l \leq mk$, the graph $\rel H$ contains an $l$-clique. This proves the completeness of the reduction.  
  
  Assume now that $\rel H$ contains a $g(l)$-clique. By taking the largest intersection of that clique with a copy of $\rel G$ in $\rel H$, we obtain a clique in $\rel G$ of size at least $g(l)/m$. Since $g(l)/m \geq g(l)/(l/k+1) \geq f(k)$, the graph $\rel G$ contains an $f(k)$-clique, proving soundness of the reduction.
\end{proof}

The proposition allows us to slightly refine Theorem~\ref{thm:approx_clique}.

\begin{corollary}
    For any $0<c \leq 1$ and any infinite $L \subseteq \mathbb{N}$, the problem $\PHOM(\{(\rel K_{g(l)},\rel K_l) \mid l \in L\})$  is W[1]-hard whenever $g(l) \geq cl$ for all $l \in L$.
\end{corollary}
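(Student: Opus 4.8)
The plan is to derive the statement by feeding Lin's hardness result (Theorem~\ref{thm:approx_clique}) into the sparsification reduction of Proposition~\ref{prop:sparse}. Before that I would peel off a degenerate case. Since membership in the template forces $\rel K_{g(l)} \to \rel K_l$, i.e.\ $g(l) \le l$, it may happen that $g(l) = l$ for infinitely many $l \in L$; for those $l$ the pair is $(\rel K_l, \rel K_l)$, so the CSP template $\mathcal C = \{\rel K_l : l \in L,\ g(l) = l\}$ is a sub-template of $\Gamma = \{(\rel K_{g(l)},\rel K_l) \mid l \in L\}$ and $\PHOM(\mathcal C,-)$ reduces to $\PHOM(\Gamma,-)$ by the identity map. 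As the $\rel K_l$ are cores of unbounded tree width, $\mathcal C$ does not have bounded tree width modulo homomorphic equivalence, so $\PHOM(\mathcal C,-)$ — hence also $\PHOM(\Gamma,-)$ — is W[1]-hard by Theorem~\ref{thm:hardGrohe}, and this case is settled. I may therefore assume $g(l) < l$ for all but finitely many $l \in L$, and, after discarding those (passing once more to a sub-template), that $g(l) < l$ for every $l$ in an infinite set $L^* \subseteq L$ on which still $g(l) \ge cl$.

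Next I would fix the source of hardness. Put $f(n) = \lfloor cn/2 \rfloor$. Then $f(n) \le cn/2 \le n/2 < n$ for every $n$, so $f$-$\gapc$ is a legitimate problem, and $f(n) \ge cn/2 - 1 \ge (c/4)n$ whenever $n \ge 4/c$. Since W[1]-hardness of a parameterized problem does not depend on the behaviour of $f$ on a finite set of parameter values, Theorem~\ref{thm:approx_clique}, applied with the constant $c/4$ in place of $c$, tells us that $f$-$\gapc$ is W[1]-hard.

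It now remains to verify the hypothesis of Proposition~\ref{prop:sparse} for this $f$, our $g$, and the index set $L^*$: for every $k \in \mathbb{N}$ I must find $l \in L^*$ with $g(l) \ge (l/k + 1)f(k)$. As $L^*$ is infinite, I simply take any $l \in L^*$ with $l \ge k$; then
\[
\left(\frac{l}{k} + 1\right) f(k) \;\le\; \left(\frac{l}{k} + 1\right)\frac{ck}{2} \;=\; \frac{c(l+k)}{2} \;\le\; cl \;\le\; g(l),
\]
where the first step uses $f(k) \le ck/2$, the second (the inequality) uses $k \le l$, and the last is the hypothesis $g(l) \ge cl$. By Proposition~\ref{prop:sparse} there is an fpt-reduction from $f$-$\gapc$ to $\PHOM(\{(\rel K_{g(l)},\rel K_l) \mid l \in L^*\},-)$, so the latter is W[1]-hard, and since it is $\PHOM$ of a sub-template of $\Gamma$, $\PHOM(\Gamma,-)$ is W[1]-hard as well.

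I expect the only real obstacle to be choosing $f$ so that it meets two competing demands at once: the linear lower bound required by Theorem~\ref{thm:approx_clique}, and the need for $f(k)/k$ to stay strictly — in fact comfortably — below $c$ so that for each $k$ an admissible $l \in L^*$ is available in Proposition~\ref{prop:sparse}; integrality together with $g(l) \le l$ additionally forces the $g(l) = l$ and small-parameter cases to be handled apart. The choice $f(n) = \lfloor cn/2 \rfloor$, combined with the observation that $l \ge k$ already suffices in the displayed estimate, threads this needle; everything else amounts to routine checking of the fpt-reduction conditions.
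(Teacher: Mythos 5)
Your argument is correct and is essentially the paper's own proof: take $f(n)\approx (c/2)n$, for each $k$ pick any $l\in L$ with $l\geq k$, verify the hypothesis of Proposition~\ref{prop:sparse} via $g(l)\geq cl\geq c(l+k)/2=(l/k+1)f(k)$, and feed in Theorem~\ref{thm:approx_clique}. The additional bookkeeping you include (flooring $f$ and adjusting the constant on finitely many small parameters, and splitting off the degenerate $g(l)=l$ case via Theorem~\ref{thm:hardGrohe}) is sound and merely tidies up technicalities that the paper's two-line proof glosses over.
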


\begin{proof}
    We define $f(n) = (c/2)n$
    and for each $k$ we take any $l \in L$ with $l \geq k$.
    Since $g(l) \geq cl \geq c(l/k+1)k/2 = (l/k+1)f(k)$, Proposition~\ref{prop:sparse} gives us a reduction from $f$-$\gapc$, which is W[1]-hard by Theorem~\ref{thm:approx_clique}.
\end{proof}

\section{Conclusion}

We introduced the framework of left-hand side restricted PCSPs, which simultaneously generalizes left-hand side restricted CSPs and approximation versions of the $k$-clique problem, and we provided some initial results.
The main technical contribution is the sufficient condition for W[1]-hardness in Theorem~\ref{thm:left_hard} which, in particular, covers left-hand side restricted bounded arity CSPs. 
However, it remains to be seen whether this general framework for left-hand side restriction can be as fruitful as it is for the right-hand side restrictions~(see~\cite{BBKO21}). A challenging problem in this direction is to improve the sufficient condition so that it not only covers CSPs but also the constant factor approximation of $k$-clique stated in Theorem~\ref{thm:approx_clique}. Such a result seems to require a significantly different construction.

\bibliography{lhspcsp}

\begin{thebibliography}{10}

\bibitem{BBKO21}
Libor Barto, Jakub Bul{\'{\i}}n, Andrei~A. Krokhin, and Jakub Opr{\v s}al.
\newblock Algebraic approach to promise constraint satisfaction.
\newblock {\em J. {ACM}}, 68(4):28:1--28:66, 2021.
\newblock \href {https://doi.org/10.1145/3457606} {\path{doi:10.1145/3457606}}.

\bibitem{BKW17}
Libor Barto, Andrei Krokhin, and Ross Willard.
\newblock Polymorphisms, and how to use them.
\newblock In Andrei Krokhin and Stanislav {\v Z}ivn{\' y}, editors, {\em The Constraint Satisfaction Problem: Complexity and Approximability}, volume~7 of {\em Dagstuhl Follow-Ups}, pages 1--44. Schloss Dagstuhl--Leibniz-Zentrum fuer Informatik, Dagstuhl, Germany, 2017.
\newblock URL: \url{http://drops.dagstuhl.de/opus/volltexte/2017/6959}, \href {https://doi.org/10.4230/DFU.Vol7.15301.1} {\path{doi:10.4230/DFU.Vol7.15301.1}}.

\bibitem{Bul17}
Andrei~A. Bulatov.
\newblock A dichotomy theorem for nonuniform {CSP}s.
\newblock In {\em 2017 IEEE 58th Annual Symposium on Foundations of Computer Science (FOCS)}, pages 319--330, 2017.
\newblock \href {https://doi.org/10.1109/FOCS.2017.37} {\path{doi:10.1109/FOCS.2017.37}}.

\bibitem{Dalmau2002}
V{\'i}ctor Dalmau, Phokion~G. Kolaitis, and Moshe~Y. Vardi.
\newblock Constraint satisfaction, bounded treewidth, and finite-variable logics.
\newblock In Pascal Van~Hentenryck, editor, {\em Principles and Practice of Constraint Programming - CP 2002}, pages 310--326, Berlin, Heidelberg, 2002. Springer Berlin Heidelberg.

\bibitem{DOWNEY1995109}
Rod~G. Downey and Michael~R. Fellows.
\newblock {Fixed-parameter tractability and completeness II: On completeness for W[1]}.
\newblock {\em Theoretical Computer Science}, 141(1):109--131, 1995.
\newblock URL: \url{https://www.sciencedirect.com/science/article/pii/0304397594000973}, \href {https://doi.org/10.1016/0304-3975(94)00097-3} {\path{doi:10.1016/0304-3975(94)00097-3}}.

\bibitem{downey2012parameterized}
Rodney~G. Downey and Michael~Ralph Fellows.
\newblock {\em Parameterized complexity}.
\newblock Monographs in Computer Science. Springer New York, NY, 2012.
\newblock \href {https://doi.org/10.1007/978-1-4612-0515-9} {\path{doi:10.1007/978-1-4612-0515-9}}.

\bibitem{FG2006}
Jörg Flum and Martin Grohe.
\newblock {\em Parameterized Complexity Theory}.
\newblock Texts in Theoretical Computer Science. An EATCS Series. Springer Berlin, Heidelberg, 2006.
\newblock \href {https://doi.org/10.1007/3-540-29953-X} {\path{doi:10.1007/3-540-29953-X}}.

\bibitem{Grohe2007}
Martin Grohe.
\newblock The complexity of homomorphism and constraint satisfaction problems seen from the other side.
\newblock {\em J. ACM}, 54(1), 2007.
\newblock \href {https://doi.org/10.1145/1206035.1206036} {\path{doi:10.1145/1206035.1206036}}.

\bibitem{Kar72}
Richard~M. Karp.
\newblock Reducibility among combinatorial problems.
\newblock In Raymond~E. Miller, James~W. Thatcher, and Jean~D. Bohlinger, editors, {\em Symposium on the Complexity of Computer Computations}, pages 85--103, Boston, MA, 1972. Springer US.
\newblock \href {https://doi.org/10.1007/978-1-4684-2001-2_9} {\path{doi:10.1007/978-1-4684-2001-2_9}}.

\bibitem{KO22}
Andrei~A. Krokhin and Jakub Opr{\v s}al.
\newblock An invitation to the promise constraint satisfaction problem.
\newblock {\em {ACM} {SIGLOG} News}, 9(3):30--59, 2022.
\newblock \href {https://doi.org/10.1145/3559736.3559740} {\path{doi:10.1145/3559736.3559740}}.

\bibitem{Lin2021ConstantAK}
Bingkai Lin.
\newblock Constant approximating k-clique is w[1]-hard.
\newblock {\em Proceedings of the 53rd Annual ACM SIGACT Symposium on Theory of Computing}, 2021.

\bibitem{Marx13}
D\'{a}niel Marx.
\newblock Tractable hypergraph properties for constraint satisfaction and conjunctive queries.
\newblock {\em J. ACM}, 60(6), 2013.
\newblock \href {https://doi.org/10.1145/2535926} {\path{doi:10.1145/2535926}}.

\bibitem{ROBERTSON198692}
Neil Robertson and P.D Seymour.
\newblock {Graph minors. V. Excluding a planar graph}.
\newblock {\em Journal of Combinatorial Theory, Series B}, 41(1):92--114, 1986.
\newblock URL: \url{https://www.sciencedirect.com/science/article/pii/0095895686900304}, \href {https://doi.org/10.1016/0095-8956(86)90030-4} {\path{doi:10.1016/0095-8956(86)90030-4}}.

\bibitem{RBW06}
Francesca Rossi, Peter van Beek, and Toby Walsh, editors.
\newblock {\em Handbook of Constraint Programming}, volume~2 of {\em Foundations of Artificial Intelligence}.
\newblock Elsevier, 2006.
\newblock URL: \url{https://www.sciencedirect.com/science/bookseries/15746526/2}.

\bibitem{CSK22}
Karthik~C. S. and Subhash Khot.
\newblock {Almost Polynomial Factor Inapproximability for Parameterized k-Clique}.
\newblock In Shachar Lovett, editor, {\em 37th Computational Complexity Conference (CCC 2022)}, volume 234 of {\em Leibniz International Proceedings in Informatics (LIPIcs)}, pages 6:1--6:21, Dagstuhl, Germany, 2022. Schloss Dagstuhl -- Leibniz-Zentrum f{\"u}r Informatik.
\newblock URL: \url{https://drops.dagstuhl.de/opus/volltexte/2022/16568}, \href {https://doi.org/10.4230/LIPIcs.CCC.2022.6} {\path{doi:10.4230/LIPIcs.CCC.2022.6}}.

\bibitem{Zhu17}
Dmitriy Zhuk.
\newblock A proof of {CSP} dichotomy conjecture.
\newblock In {\em 2017 IEEE 58th Annual Symposium on Foundations of Computer Science (FOCS)}, pages 331--342, 2017.
\newblock \href {https://doi.org/10.1109/FOCS.2017.38} {\path{doi:10.1109/FOCS.2017.38}}.

\bibitem{Zhu20}
Dmitriy Zhuk.
\newblock A proof of the {CSP} dichotomy conjecture.
\newblock {\em J. ACM}, 67(5), 2020.
\newblock \href {https://doi.org/10.1145/3402029} {\path{doi:10.1145/3402029}}.

\end{thebibliography}

\appendix

\end{document}